\newcommand{\beq}{\begin{equation}}
\newcommand{\eeq}{\end{equation}}
\newcommand{\beqn}{\begin{eqnarray}}
\newcommand{\eeqn}{\end{eqnarray}}
\newtheorem{theorem}{\textbf{\text{Theorem}}}
\newtheorem{lemma}{\textbf{\text{Lemma}}}
\newenvironment{proof}[1][Proof]{\begin{trivlist}
\item[\hskip \labelsep {\bfseries #1}]}{\end{trivlist}}
\newcommand{\qed}{\nobreak \ifvmode \relax \else
      \ifdim\lastskip<1.5em \hskip-\lastskip
      \hskip1.5em plus0em minus0.5em \fi \nobreak
      \vrule height0.75em width0.5em depth0.25em\fi}
\begin{document}


\title{On Stochastic Geometry Modeling of Cellular Uplink Transmission with Truncated Channel Inversion Power Control}

\author{Hesham ElSawy and Ekram Hossain 
\thanks{The authors are with the Department of Electrical and Computer Engineering, University of Manitoba, Winnipeg, Canada (emails:  umelsawy@cc.umanitoba.ca, Ekram.Hossain@umanitoba.ca).  
This work was supported in part by an NSERC Strategic  Grant (STPGP 430285), and in part by a scholarship from TR{\em Tech}, Winnipeg, Manitoba, Canada.
} 

\vspace{-2mm}
}

\maketitle

\begin{abstract}
Using stochastic geometry, we develop a tractable uplink modeling paradigm for outage probability and spectral efficiency in both single and multi-tier cellular wireless networks. The analysis accounts for per user equipment (UE) power control as well as the maximum power limitations for UEs. More specifically, for interference mitigation and robust uplink communication, each UE is required to control its transmit power such that the average received signal power at its serving base station (BS) is equal to a certain threshold $\rho_o$. Due to the limited transmit power, the UEs employ a truncated channel inversion power control policy with a cutoff threshold of $\rho_o$. We show that there exists a transfer point in the uplink system performance that depends on the tuple: BS intensity ($\lambda$), maximum transmit power of UEs ($P_u$), and $\rho_o$. That is, when $P_u$ is a tight operational constraint with respect to [w.r.t.] $\lambda$ and $\rho_o$, the uplink outage probability and spectral efficiency highly depend on the values of $\lambda$ and $\rho_o$. In this case, there exists an optimal cutoff threshold $\rho^*_o$, which depends on the system parameters, that minimizes the outage probability. On the other hand, when $P_u$ is not a binding operational constraint w.r.t. $\lambda$ and $\rho_o$, the uplink outage probability and spectral efficiency become independent of $\lambda$ and $\rho_o$. We obtain approximate yet accurate simple expressions for outage probability and spectral efficiency which reduce to closed-forms in some special cases. 


{\em Keywords}:- Multi-tier cellular networks, uplink communication, power control, truncated channel inversion,  stochastic geometry. 

\end{abstract}

\section{Introduction} \label{intro}

Due to the variations of the capacity demand across the service areas (e.g., downtowns, residential and suburbans areas, etc.) as well as the deployment of multi-tier cellular networks (i.e., cellular networks consisting of different types of base stations [BSs] such as macro, micro, pico, and femto BSs), the practical cellular infrastructure deviates from the well-known regular grid topology. Instead, it is better modeled by a randomized topology with uncertainties in the locations of the BSs~\cite{trac, trac2}. Therefore,  modeling and analysis of  cellular networks based on stochastic geometry has recently received much attention due to its tractability and accuracy \cite{our-survey}. Stochastic geometry is a very powerful tool to deal with networks with random topologies. Stochastic geometry abstracts the cellular network topology to a convenient point process to facilitate its modeling and analysis. The Poisson point process (PPP) is an appealing abstraction for the cellular network topology due to its simplicity and tractability~\cite{our-survey, martin-book}. Furthermore, the PPP has been proven to provide pessimistic bounds on the performance metrics (e.g., the outage probability and the mean rate)  that are as tight as the optimistic bounds provided by the idealized grid-based model for actual cellular networks \cite{trac, trac2}. Further validations of stochastic geometry-based modeling of cellular networks via PPP can be found in \cite{pp-cellular, valid, wyner-acc}. 

There have been significant developments in the stochastic geometry modeling of cellular wireless networks. However, most of the available literature focus on modeling and analysis of downlink transmissions due to its relative simplicity. In a multi-tier cellular network, if the users associate to the BSs in the downlink based on their average received signal strength, the average useful signal power received at each user equipment (UE) from its corresponding BS will be strictly greater than the average interference power from any individual interfering BS. Therefore, power control is not crucial\footnote{Indeed power control is important in downlink to enhance the network performance (see \cite{Pcon}) 
but not crucial for the basic network operation due to the inherent interference protection introduced by the association criterion.} for the network operation, and hence, power control is ignored in most of the stochastic geometry models on downlink cellular networks and it is generally assumed that all the BSs in the same tier transmit with equal power \cite{trac, trac2, our-survey, sayandev, our-tmc, tony, 
 our-cog1, cao, MIMO3}.
In contrast, uplink analysis is quite more involved due to the following reasons:
\begin{itemize}
\item \textbf{Per user power control:} in the uplink, due to the random cell sizes, an interfering UE in a neighboring cell can be much closer to a BS than its tagged UE (cf. Fig.~\ref{per_chan}). Therefore, power control per UE is crucial for basic uplink operation in order to mitigate the inter-cell interference. As will be shown later, per UE power control introduces a new source of randomness to the uplink system model which makes the uplink analysis more involved. 

\item \textbf{Correlation among interferers:} orthogonal channel assignment per BS ensures no channel reuse in the same Voronoi\footnote{A Voronoi tessellation is the planar graph constructed by perpendicular lines bisecting the distances between the points of a point process. The Voronoi cells constructed from a point process representing the  BS locations correspond to the coverage regions of the BSs.} cell. That is, given that a UE is transmitting in the uplink on a certain channel, this channel cannot be reused within the coverage of its serving BS, and hence, the locations of the UEs using the same uplink channel are correlated. 
\end{itemize}

\begin{figure}[t]
	\begin{center}
		\includegraphics[trim = 3.5cm 7.5cm 3.5cm 7.3cm , clip, width=2.5 in]{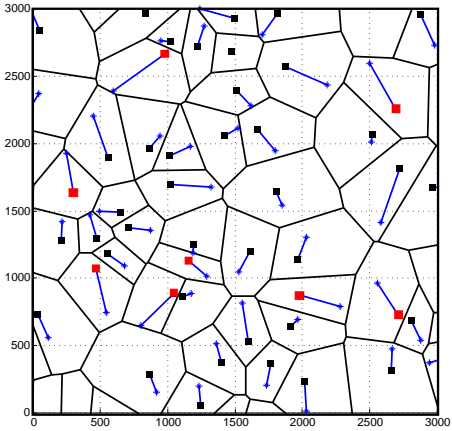}
	  \end{center}
	\caption{Uplink network model representing the UEs served per channel, where the BSs are represented by squares, the UEs are represented by stars, and the lines denote the UEs' association. BSs having an interfering UE closer than their tagged UE are highlighted in red.}
	\label{per_chan}
\end{figure}

Compared to the downlink, only few efforts were invested to understand and model the uplink transmissions in cellular networks. Uplink modeling has been done in an ad hoc manner where different works in the literature made different assumptions based on the problem in hand. For instance, \cite{and-up2} derived the uplink network capacity region for a two-tier cellular network consisting of macro BSs (MBSs) modeled via the hexagonal grid, femto access points (FAPs) modeled via PPP, and UEs modeled via an independent PPP. Due to the small coverage radius of the FAPs, in \cite{and-up2}, the interference seen from all uplink UEs associated with the same FAP was approximated by an isotropic point source of interference with the worst-case sum transmit powers of the FAP UEs (i.e., power control was ignored). \cite{spec-share} investigated the spectrum sharing between a cellular network uplink and a mobile ad hoc network. In \cite{spec-share}, the service areas of all BSs were approximated by circles and power control was ignored. Note that the assumption of circular coverage areas for the BSs eliminates the aforementioned interference problem in the uplink where an interfering UE can be much closer to a BS than the tagged UE which communicates to that BS. Therefore,  under the assumption of circular coverage area, power control can be ignored. 

In \cite{and-uplink}, the authors modeled uplink transmission with fractional channel inversion power control in a single-tier cellular network. However, for analytical tractability and to avoid the singularity at the origin imposed by the unbounded path-loss model\cite{our-survey}, the authors in \cite{and-uplink} approximated the entire network model and assumed that the Voronoi cells are divided with respect to (w.r.t.) the users (rather than w.r.t.  the BSs) and each user has her own serving BS randomly located in her Voronoi cell. In \cite{and-d2d}, the authors modeled uplink UEs with channel inversion power control. However, the authors assumed that the tagged UE is uniformly distributed in the tagged BS's coverage which is approximated by a circle having the radius $\frac{1}{\sqrt{\pi \lambda}}$, where $\lambda$ is the BS intensity\footnote{More discussions on the work presented in \cite{and-uplink, and-d2d} will be provided in Section \ref{resss}.}. Each of the available stochastic geometry models \cite{and-up2, spec-share, and-uplink, and-d2d} approximates the system model in a different way to simplify the analysis and none of them accounts for the maximum transmit power of the UEs. 

In this paper, we propose a framework for modeling and analysis of uplink transmission in single and multi-tier cellular networks. We resort to some approximations in order to maintain the tractability of analysis. In particular, the proposed framework, as will be discussed later in details,  partially ignores the correlations among the interfering UEs. That is, it captures the correlation between  the interfering sources and the test network elements (i.e., the tagged UE and the corresponding BS) and ignores the mutual correlations among the interfering sources. The accuracy of the proposed analysis is validated via simulations. To the best of the authors' knowledge, when compared to the existing literature on stochastic geometry analysis of uplink cellular systems, this paper uses the least amount of approximations and obtains the simplest expressions for outage probability and spectral efficiency of uplink transmission. The obtained simple expressions characterize the uplink system performance in both single and multi-tier cellular networks and reveal important design insights. We believe that the presented framework, results, and design insights fill in the gap between the well-understood downlink performance and the lagging uplink analysis. The contributions of the paper can be summarized in the following points:

\begin{itemize}
\item The paper presents a tractable framework for uplink modeling 
and analysis in a Poisson cellular network. The model is general and extends to multi-tier cellular networks. The model accounts for limited transmit power of the UEs, per UE power control, and cutoff threshold for the power control. 

\item Approximate yet accurate simple closed-form equations are derived for the outage probability and simple forms with only one numerical integral are derived for the spectral efficiency.

\item The paper discusses the tradeoffs introduced by the cutoff threshold and the maximum transmit power and shows that there exists an optimal cutoff threshold for power control that minimizes the outage probability and  transmit power of the UEs.

\item The paper characterizes the uplink performance and shows the commonalities and differences between the downlink and uplink performances in cellular networks. In particular, we show the existence of a transfer point in the uplink system performance which depends on the tuple ($\lambda$, $P_u$, $\rho_o$), where $\lambda$ is the BS intensity, $P_u$ is the maximum transmit power of a UE, and $\rho_o$ is the average received power required at the serving BS. That is, when the relative values of $\lambda$, $P_u$, and $\rho_o$ lead to a binding maximum transmit power constraint for the uplink operation, the uplink operation is quite different from the downlink operation and depends on both the cutoff threshold and the BS intensity.
In contrast, when the relative values of $\lambda$, $P_u$, and $\rho_o$ lead to a non-binding maximum transmit power constraint for the uplink operation (i.e., transmissions are not constrained by the maximum transmit power), the uplink operation becomes analogous to the downlink operation (i.e., becomes independent of the BS intensity and cutoff threshold).

\end{itemize}

The rest of the paper is organized as follows. The system model, assumptions, and methodology of analysis are presented in Section II. In Section III, the baseline uplink modeling paradigm for single-tier cellular networks is presented. Section IV generalizes the developed uplink paradigm for multi-tier cellular networks. Numerical and simulation results are presented in Section V and the paper is concluded in Section VI.

\section{System Model and Assumptions}

\subsection{Network Model}

We consider an independent $K$-tier Poisson cellular network. That is, the BSs of each tier are spatially distributed in $\mathbb{R}^2$ according to an independent homogenous PPP $\mathbf{\Psi}_k=\left\{m^{(k)}_i; i=1,2,3,...\right\}$, $k \in \left\{1,2,3,...,K\right\}$ with intensity $\lambda_k$, where $m^{(k)}_i \in \mathbf{\Psi}_k$ is the location of the $i^{th}$ BS in the $k^{th}$ tier. The UEs are spatially distributed in $\mathbb{R}^2$ according to an independent PPP $\mathbf{\Phi}=\left\{u_i; i=1,2,3,...\right\}$ with intensity $\mathcal{U}$.\footnote{With a slight abuse of notation we will use $m^{(k)}_i$ to denote both the location of the $i^{th}$ BS in the $k^{th}$ tier and the BS itself, and the same for $u_i$.} It is assumed that the intensity of the UEs  is high enough such that each BS will have at least one user served per channel and that UEs have data to transmit in the uplink (i.e., saturation conditions are assumed). The BSs in each tier $k$ have equal receiver sensitivity $\rho^{(k)}_{min}$, however, two BSs from different tiers do not necessarily have the same receiver sensitivity. For successful uplink communication, it is required that the received signal power at the BS is greater than the receiver sensitivity. Therefore, each of the UEs associated to tier $k$ controls its transmit power such that the average signal received at the corresponding serving BS is equal to the threshold $\rho^{(k)}_o$, where $\rho^{(k)}_o > \rho^{(k)}_{min}$. 

It is assumed that all the UEs have the same maximum transmit power of $P_u$. Due to the maximum transmit power constraint $P_u$ for uplink communications, the UEs  use truncated channel inversion power control where the transmitters compensate for the path-loss to keep the average received signal power equal to the threshold $\rho^{(k)}_o$ \cite[chapter 4]{andrea-book}. That is, an uplink connection is established between a UE and its serving BS if and only if the transmit power required for the path-loss inversion is less than $P_u$. Otherwise, the UE does not transmit and goes into an outage (denoted hereafter as {\em truncation outage}) due to the insufficient transmit power. 

It is worth mentioning that the truncated channel inversion power control mechanism is a realistic power control scheme for code-division multiple access (CDMA) networks to eliminate the near-far effect. Moreover, for orthogonal frequency-division multiple access (OFDMA) networks, it has been shown in \cite{hina} that if the edge users (i.e., users who do not have sufficient power for their channel inversions) are allowed to transmit with their maximum power, the interference in the system increases significantly. Consequently, the network performance deteriorates without much improvement in the cell edge user performance. Hence, in this paper we consider the truncated channel inversion power control with the cutoff threshold $\rho^{(k)}_o$, where the cutoff threshold $\rho^{(k)}_o$ is a network design parameter that highly impacts the system behavior. As will be shown later, the relative values of the BS intensity, $P_u$, and $\rho^{(k)}_o$ control the tradeoff among transmit power, signal-to-interference-plus-noise-ratio ({\rm SINR}), and truncation outage (i.e., outage due to insufficient transmit power). 

Fig.~\ref{network} shows the network model for different values of $\rho_o$ for a single-tier cellular network. As shown in Fig.~\ref{net1}, if the value of $\rho_o$ is relatively high (i.e., relative to $\lambda$ and $P_u$), not all the UEs can compensate for the path-loss  and the cell edge UEs suffer from {\em truncation outage}. In contrast, Fig.~\ref{net2} shows that if the value of $\rho_o$ is relatively low (or equivalently, the BSs are dense enough and/or the maximum transmit power is high enough), all of the UEs can compensate for the path-loss and none of the UEs suffers from truncation outage. Hereafter, we will denote the scenario in Fig.~\ref{net1} as the uplink operation under binding maximum transmit power constraint  and the scenario in Fig.~\ref{net2} as the uplink operation under non-binding maximum transmit power constraint. Universal frequency reuse is used within each tier and across different tiers. Within a network tier, there is no intra-cell interference. That is, in a network tier, each BS assigns a unique channel to each of its associated UEs. 

\begin{figure}[t]
	\begin{center}
		\subfigure[]{\label{net1}\includegraphics[width=2.5 in]{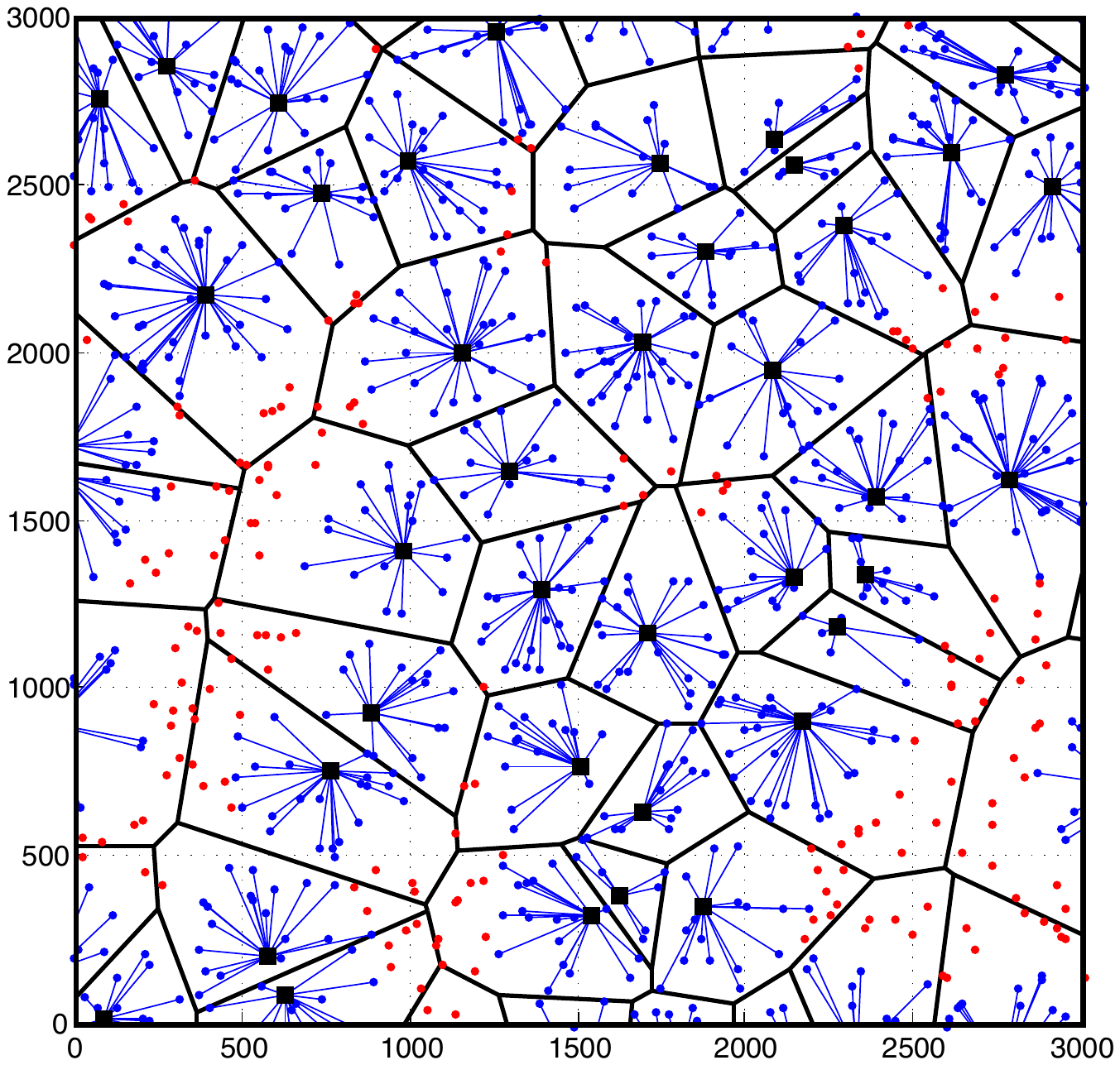}}
	  \subfigure[]{\label{net2}\includegraphics[width=2.5 in]{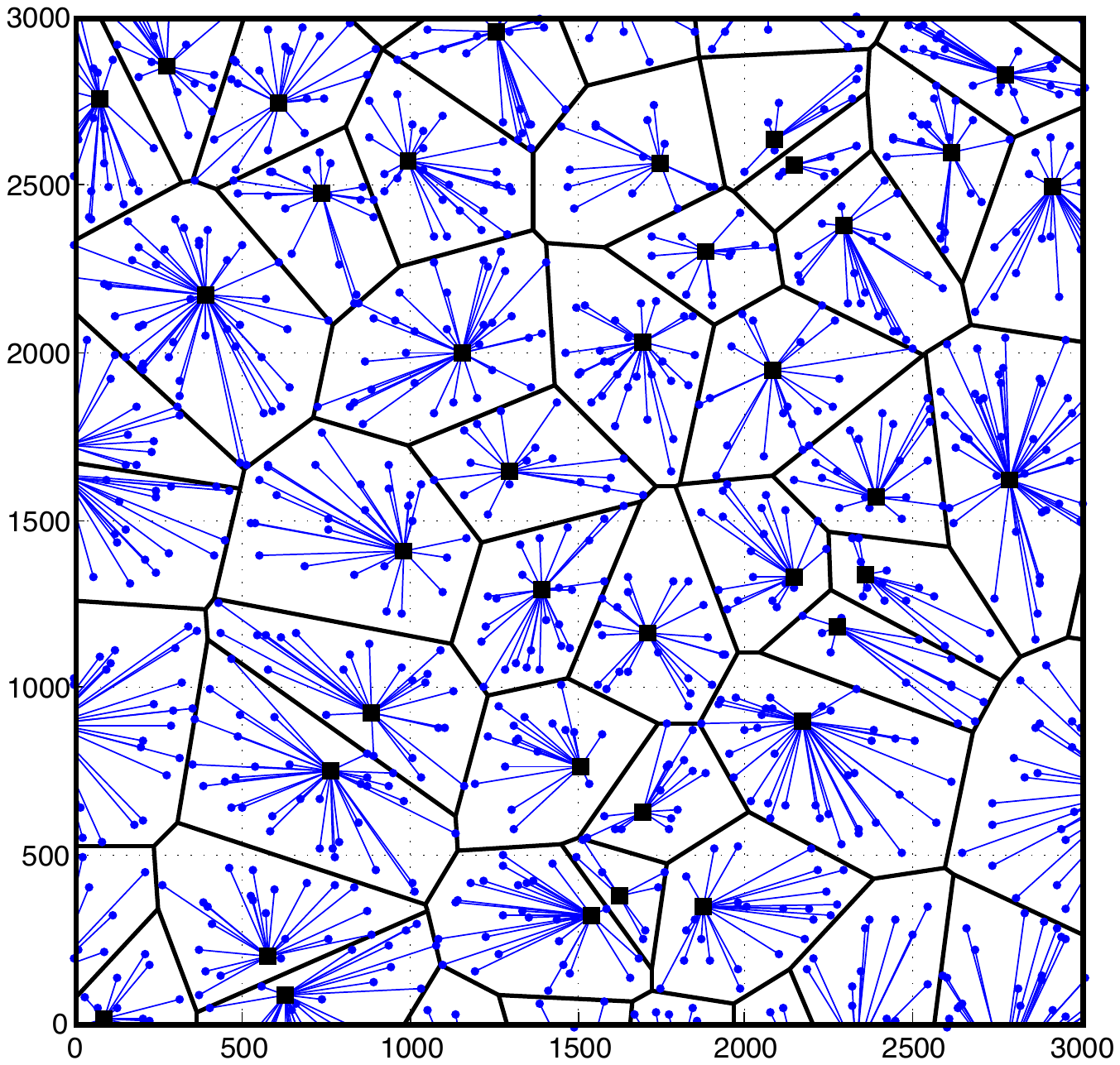}}
	\end{center}
	\caption{ Network model showing the served users on all channels in $[0,3000]^2$ m$^2$; the black squares represent the BSs, the blue dots represent the active UEs, the blue lines denote the UEs' associations, and the red dots are the inactive UEs due to the insufficient transmit power. The simulation parameters are $\lambda = 5$ BS/km$^2$, $\mathcal{U}=100$ UE/km$^2$, and (a) $\rho_o = -70$ dBm, (b) $\rho_o = -90$ dBm.}
	\label{network}
\end{figure}


\subsection{Radio Channel Model} \label{Rmodel}

A general power-law path-loss model is considered in which the signal power decays at the rate $r^{-\eta}$ with the propagation distance $r$, where $\eta$ is the path-loss exponent. For simplicity, it is assumed that all BSs in all tiers share the same path-loss exponent $\eta$. The case where each network tier has its own path-loss exponent $\eta_k$ will be discussed in Section~\ref{multi}. The channel (power) gain between two generic locations $x,y \in \mathbb{R}^2$ is denoted by $h(x,y)$. All the channel gains are assumed to be independent of each other, independent of the spatial locations, symmetric and are identically distributed (i.i.d.). Therefore, for the brevity of exposition, hereafter, the spatial  indices $x,y$ are dropped. For analysis, only Rayleigh fading  is assumed\footnote{Techniques to relax the Rayleigh fading assumption to general fading channels can be found in \cite{our-survey}.} and the channel power gain $h$ is assumed to be exponentially distributed with unit mean. A signal-to-interference-plus-noise ratio ({\rm SINR}) model is considered where a message can be successfully decoded at the tagged BS in tier $k$ if and only if the {\rm SINR} of the useful signal is greater than a certain threshold $\theta_k$. 
If the {\rm SINR} at the tagged BS does not exceed the threshold $\theta_k$, 
 the link experiences an outage (hereafter referred to as {\em {\rm SINR} outage}). 

\subsection{Criterion for Uplink Association}

Without loss of generality, all BSs in all tiers are assumed to have an open access policy\footnote{Closed access policy can be easily captured by thinning the PPP representing the complete set of BSs in the association analysis and simple modifications in the interference analysis~\cite{our-cog1}.}, and hence, all UEs can associate with all BSs. The UEs are assumed to associate to the BSs according to their  average link quality\footnote{Note that user association is not based on the instantaneous link quality to avoid the ping-pong effect in handovers that may arise due to the rapid fluctuations of the channel gains~\cite{req1}.}. 
That is, a generic UE $u$ associates with its nearest BS from all tiers (i.e., $m^{(l)}_i$ if $\left\|u-m^{(l)}_i\right\| < \underset{m^{(k)}_j \in \left\{\mathbf{\Psi}_k\setminus m^{(l)}_i \right\}}\min\left\|u-m^{(k)}_j\right\|$, $\forall k$, where $\left\|.\right\|$ denotes the Euclidean norm). 



\subsection{Modeling Methodology}

The {\rm SINR} is a very important parameter  that affects many performance metrics such as outage, rate, delay, and energy efficiency. We characterize the {\rm SINR} by driving its cumulative distribution function ({\em cdf}) \cite{our-survey}. 
Due to the randomized network topology, the distances between the UEs and their serving BSs are random. Therefore, the transmit powers of the UEs are random (due to the truncated channel inversion power control).

We will first characterize the transmit power of a generic active UE (i.e., a generic user not in truncation outage) by deriving its probability density function ({\em pdf}) and moments. Then, we derive the {\em cdf} of {\rm SINR}. We first develop the modeling paradigm for single-tier cellular networks. Then, we show that the developed modeling paradigm for single-tier cellular networks can be naturally extended to multi-tier networks.  

\section{Uplink Modeling in a Single-tier Cellular Network}

In this section, we develop the baseline uplink modeling framework for a single-tier cellular network. For the sake of an organized presentation, we further divide this section into two subsections, namely, transmit power analysis and {\rm SINR} analysis. 

\subsection{Transmit Power Analysis} \label{tx}

Due to the random network topology and the use of truncated channel inversion power control, each UE will transmit with a different power to invert the path-loss towards its serving BS. In this section, we derive the distribution and the moments of the transmit power of a generic UE. Fig.~\ref{network} shows the uplink association for a single-tier cellular network for different values of the cutoff threshold $\rho_o$. As shown in Fig.~\ref{net1}, due to the truncated channel inversion power control, not all of the UEs can communicate in the uplink when the cutoff threshold is relatively high (i.e., relative to $P_u$ and $\lambda$). That is, the UEs located at a distance greater than $\left(\frac{P_u}{\rho_o}\right)^\frac{1}{\eta}$ from their nearest BS are unable to communicate in the uplink due the insufficient transmit power. Therefore, the complete set of UEs is divided into two non-overlapping subsets, namely, the subset of {\em active} UEs and the subset of {\em inactive} UEs. The inactive UEs do not transmit and experience outage due to insufficient transmit power. The distribution for the transmit power of a generic active UE is obtained from the following lemma.

\begin{lemma}
\label{lem1}
In a single-tier Poisson cellular network with truncated channel inversion power control with the cutoff threshold $\rho_o$, the {\em pdf} of the transmit power of a generic active UE in the uplink is given by
 \begin{align}
f_{P}(x) &=  \frac{2 \pi \lambda x^{\frac{2}{\eta}-1} e^{-\pi \lambda \left(\frac{x}{\rho_o}\right)^{\frac{2}{\eta}}}}{\eta  \rho_o^{\frac{2}{\eta}} \left(1-e^{-\pi \lambda \left(\frac{P_u}{\rho_o}\right)^{\frac{2}{\eta}}}\right)}, 0\leq x \leq P_u. \notag 
\end{align}
The moments of the transmit power can be obtained as 
\begin{equation}
\mathbb{E}\left[P^\alpha\right] =   \frac{\rho_o^\alpha \gamma\left(\frac{\alpha \eta}{2}+1, \pi \lambda \left(\frac{P_u}{\rho_o}\right)^\frac{2}{\eta}\right)}{(\pi \lambda)^{\frac{\alpha \eta}{2}}\left(1-e^{-\pi \lambda \left(\frac{P_u}{\rho_o}\right)^{\frac{2}{\eta}}}\right)}
\end{equation}  
where $\gamma(a,b)=\int_0^b t^{a-1} e^{-t} dt$ is the lower incomplete gamma function.
\end{lemma}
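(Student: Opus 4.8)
The plan is to reduce everything to the distribution of the distance $R$ from a typical active UE to its serving (nearest) BS, because the truncated channel-inversion rule makes the transmit power a deterministic function of that single random variable. Indeed, to keep the average received power equal to $\rho_o$ against a path loss $R^{-\eta}$, a UE at distance $R$ must transmit with power $P = \rho_o R^\eta$; the UE is \emph{active} precisely when this required power does not exceed $P_u$, i.e. when $R \le (P_u/\rho_o)^{1/\eta}$, equivalently $P \le P_u$. Thus both the \emph{pdf} and the moments of $P$ will follow from the (conditioned) law of $R$.

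First I would establish the law of $R$. Since the UEs form a PPP independent of the BS PPP $\mathbf{\Psi}$, by the stationarity of the BS process the serving distance of a typical UE is the contact distance of $\mathbf{\Psi}$ measured from an arbitrary origin. Using the void probability of the homogeneous PPP, $\mathbb{P}(R > r) = \mathbb{P}(\text{no BS in the disk of radius } r) = e^{-\pi\lambda r^2}$, so that $f_R(r) = 2\pi\lambda r\, e^{-\pi\lambda r^2}$. The activation probability is then $\mathbb{P}\big(R \le (P_u/\rho_o)^{1/\eta}\big) = 1 - e^{-\pi\lambda (P_u/\rho_o)^{2/\eta}}$, which will serve as the normalizing constant for the active-UE conditioning.

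Next I would obtain $f_P$ by the change of variables $x = \rho_o r^\eta$ applied to the \emph{conditioned} density $f_R(r)\big/\big(1 - e^{-\pi\lambda(P_u/\rho_o)^{2/\eta}}\big)$ on $0 \le r \le (P_u/\rho_o)^{1/\eta}$. With $r = (x/\rho_o)^{1/\eta}$ and $|dr/dx| = \tfrac{1}{\eta}\rho_o^{-1/\eta} x^{1/\eta - 1}$, substituting into $2\pi\lambda r\, e^{-\pi\lambda r^2}$ and collecting the powers of $x$ and $\rho_o$ yields exactly the stated $f_P(x)$ on $0 \le x \le P_u$. For the moments I would avoid re-integrating $f_P$ and instead write $\mathbb{E}[P^\alpha] = \rho_o^\alpha\, \mathbb{E}[R^{\alpha\eta} \mid \text{active}]$, then evaluate $\int_0^{(P_u/\rho_o)^{1/\eta}} r^{\alpha\eta}\, 2\pi\lambda r\, e^{-\pi\lambda r^2}\, dr$ via the substitution $t = \pi\lambda r^2$. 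This turns the integral into $(\pi\lambda)^{-\alpha\eta/2}\int_0^{\pi\lambda(P_u/\rho_o)^{2/\eta}} t^{\alpha\eta/2} e^{-t}\, dt = (\pi\lambda)^{-\alpha\eta/2}\,\gamma\big(\tfrac{\alpha\eta}{2}+1,\, \pi\lambda(P_u/\rho_o)^{2/\eta}\big)$, and dividing by the activation probability gives the claimed moment formula.

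There is no deep obstacle here, since the whole computation is essentially the transformation of a single random variable. The only points requiring care are (i) justifying that the serving distance of a typical UE is governed by the PPP contact-distance law, which rests on the independence of the UE and BS processes together with stationarity (Slivnyak-type reasoning), so that the ``typical UE'' may be placed at the origin; and (ii) the truncation, which must be handled by conditioning on the active event and hence by the normalizing factor $1 - e^{-\pi\lambda(P_u/\rho_o)^{2/\eta}}$ rather than by simply restricting the support. Getting this normalization right is what makes $f_P$ integrate to one over $[0,P_u]$ and what produces the matching denominator in $\mathbb{E}[P^\alpha]$.
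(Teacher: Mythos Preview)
Your proposal is correct and follows essentially the same route as the paper: identify the serving distance $r_o$ via the PPP contact (Rayleigh) law, set $P=\rho_o r_o^{\eta}$, and obtain $f_P$ by a change of variables together with the normalization $1-e^{-\pi\lambda(P_u/\rho_o)^{2/\eta}}$ for the truncation. The only cosmetic difference is that the paper computes the moments directly as $\int_0^{P_u} x^\alpha f_P(x)\,dx$, whereas you pass back through $R$ with the substitution $t=\pi\lambda r^2$; both lead to the same lower incomplete gamma expression.
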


\begin{proof}
See \textbf{Appendix \ref{proof1}}.
\end{proof}

\vspace{0.2cm}
\textbf{Lemma~\ref{lem1}} shows that the smaller the cutoff threshold $\rho_o$, the lower is the  transmit power of the UEs. That is, when the maximum transmit power is unlimited (i.e., $\underset{P_u \rightarrow \infty}{\lim}\mathbb{E}\left[P^\alpha\right] =   \frac{\rho_o^\alpha \Gamma\left(\frac{\alpha \eta}{2}+1\right)}{(\pi \lambda)^{\frac{\alpha \eta}{2}}}$, where $\Gamma(.)$ is the gamma function), the expected transmit power linearly increases with $\rho_o$. This is because as $\rho_o$ increases the UEs are required to transmit at a higher power. \textbf{Lemma~\ref{lem1}} also shows that the average transmit power decreases with $\lambda$. That is, as the BS intensity increases, the distance between a generic UE and the corresponding serving  BS decreases, and hence, a lower transmit power is required to invert the path-loss. The truncation outage probability (i.e., the probability that a UE experiences outage due to the insufficient power) is given by
\begin{equation} 
\mathcal{O}_p = e^{-\pi \lambda \left(\frac{P_u}{\rho_o}\right)^{\frac{2}{\eta}}}
\label{op}
\end{equation} 
which is increasing in $\rho_o$. From \textbf{Lemma~\ref{lem1}} and equation (\ref{op}) it appears that the lower the cutoff threshold, the better is the network performance in terms of truncation outage probability and transmit power consumption. However, as will be discussed later, a low $\rho_o$ may highly deteriorate the {\rm SINR} outage and spectral efficiency, and hence, $\rho_o$ introduces a tradeoff for the network performance. Equation (\ref{op}) shows that the truncation outage probability exponentially decreases with increasing BS intensity.

\subsection{{\rm SINR} Analysis} \label{outage}

In this section we derive the {\rm SINR} {\em outage} probability for active UEs (i.e., users not in truncation outage). Note  that the inactive UEs do not transmit  and are in truncation outage due to the insufficient transmit power. Without any loss in generality, the {\rm SINR} analysis is conducted on a tagged BS located at the origin. According to Slivnyak's theorem~\cite{martin-book}, conditioning on having a BS at the origin does not change the statistical properties of the coexisting PPPs. Hence, the analysis holds for a generic BS located at a generic location.  For the tagged active UE operating on a tagged channel,  the {\rm SINR} experienced at the BS located at the origin can be written as
\beq
{\rm SINR} = \frac{ \rho_o h_o}{\sigma^2+ \underbrace{\sum_{u_i\in \tilde{\mathbf{\Phi}}} P_{i} h_i \left\|u_i\right\|^{-\eta} }_{\mathcal{I}}}
\eeq
where the useful signal power is equal to $\rho_o h_o$ due to the truncated channel inversion power control, $\sigma^2$ is the noise power, and the random variable $\mathcal{I}$ denotes the aggregate interference at the tagged BS from the uplink transmissions by other active UEs on the tagged channel. Note that $\mathcal{I}$ is not identified with the channel index because all channels have i.i.d. interference. The {\rm SINR} outage probability can be calculated as
\begin{align}
\mathbb{P}\left\{{\rm SINR} \leq \theta \right\} &= \mathbb{P} \left\{\rho_o h_o \leq \theta \left(\sigma^2+ \mathcal{I} \right) \right\} \notag \\
 &= \mathbb{E}\left[1-\exp\left\{ -\frac{\theta}{\rho_o} \left(\sigma^2+ \mathcal{I}\right) \right\}\right] \notag \\
&= 1-\exp\left\{ -\frac{\theta}{\rho_o} \sigma^2\right\} \mathcal{L}_{\mathcal{I}}\left(\frac{\theta}{\rho_o}\right)  
\label{SINR}
\end{align}
where the expectation in the second line of (\ref{SINR}) is w.r.t. $\mathcal{I}$, and $\mathcal{L}_{\mathcal{I}}(.)$ denotes the Laplace transform (LT) of the {\em pdf} of the random variable ${\mathcal{I}}.$\footnote{With a slight abuse of terminology, we will denote the LT of the {\em pdf} of a random variable $X$  by the LT of $X$.}

As discussed in Section \ref{intro}, in the uplink, the interfering UEs do not constitute a PPP due to the correlations among them. The correlation among the UEs is due to the unique channel assignment per user in each BS. Hence, the interfering UEs are better modeled using a soft-core process (e.g., Strauss point process~\cite{pp-cellular}) to capture the pairwise correlations among the locations of the active UEs  per channel. Generally, soft-core processes are not analytically tractable \cite{our-survey,martin-book}, and hence, an exact expression for the LT of the aggregate interference cannot be obtained. For this reason, we will approximate the locations of the interfering UEs with a PPP of the same intensity. Note that this approximation only partially ignores the correlations introduced by the system model because the correlation with the tagged BS and the tagged UE is captured by our model. 

\textcolor{red}{It is worth mentioning that the PPP assumption for the interference sources has been widely exploited in the literature (even with the hard core point processes which introduce stronger pairwise correlation between points) and has been proved to be accurate if the correlation among the interfering nodes and the tagged receiver is captured \cite{and-uplink, and-d2d, letter, bac1, Mod-HCPP}. It is also important to note that, due to the correlations among  the shapes of the neighboring Voronoi cells and the fact that the sytem allows only one UE per Voronoi cell, the distances between the scheduled (i.e., interfering) UEs and their serving BSs are identically distributed but are not independent~\cite{and-uplink}. Therefore, as a direct consequence of the employed channel inversion power control, the transmit powers of the scheduled UEs are identically distributed (i.e., follow the distribution derived in \textbf{Lemma~1}) but are not independent. However, the authors in \cite{and-uplink} showed that the dependence among the distances between the scheduled UEs and their serving BSs is weak and can be ignored. Therefore, for analytical tractability, we will assume that the active UEs constitute a PPP and their transmit powers are independent. The accuracy of this assumption will be validated via simulations. Exploiting the PPP approximation and the i.i.d. transmit powers for the set of interfering UEs in the uplink, the outage probability for a generic active UE can be given by the following theorem}.

\vspace{0.2cm}
\begin{theorem}
\label{th1}
In a single-tier Poisson cellular network with truncated channel inversion power control with the cutoff threshold $\rho_o$, in the uplink, assuming that the interfering UEs constitute a PPP and that their transmit powers are independent, the {\rm SINR} outage probability for a generic active UE is given by 
\small
\begin{equation}
\mathcal{O}_s= 1-\exp\left\{ -\frac{\theta \sigma^2}{\rho_o}  -   \frac{2  \theta^\frac{2}{\eta} \gamma\left(2, \pi \lambda \left(\frac{P_u}{\rho_o}\right)^\frac{2}{\eta}\right)}{\left(1-e^{-\pi \lambda \left(\frac{P_u}{\rho_o}\right)^{\frac{2}{\eta}}}\right)} \int_{\theta^\frac{-1}{\eta}}^{\infty} \frac{y}{y^{\eta}+1 }dy \right\}.  
\label{out2}
\end{equation}
\normalsize
\end{theorem}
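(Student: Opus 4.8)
The plan is to reduce the theorem to the evaluation of the Laplace transform $\mathcal{L}_{\mathcal{I}}(\theta/\rho_o)$ already isolated in (\ref{SINR}), so that everything hinges on characterizing the aggregate interference $\mathcal{I}$. Under the stated PPP approximation I would model the active interfering UEs on the tagged channel as a marked Poisson point process of intensity $\lambda$ (one scheduled active UE per cell under the saturation assumption), where the mark of each interferer is its transmit power $P$, drawn independently from the density $f_{P}$ of \textbf{Lemma~\ref{lem1}}. Since the fading powers $h_i$ are i.i.d. unit-mean exponential, I would first average over $h$ and then invoke the probability generating functional (PGFL) of the PPP, reducing $\mathcal{L}_{\mathcal{I}}(s)$ to the exponential of a single spatial integral of $\mathbb{E}_{P}\big[1-\mathbb{E}_{h}[e^{-sPhr^{-\eta}}]\big]=\mathbb{E}_{P}\big[\frac{sP}{r^{\eta}+sP}\big]$ over $\mathbb{R}^{2}$.

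The key step, and the feature that separates the uplink from the downlink, is to retain the correlation between each interferer and the tagged BS. Because every interferer associates with its nearest BS, the distance $R$ from an interferer to its own serving BS can never exceed its distance $r$ to the tagged BS at the origin; equivalently, writing $P=\rho_o R^{\eta}$, the radial integration over $r$ must run from $R$ rather than from $0$. I would therefore set up the PGFL integral in polar coordinates as $2\pi\lambda\int_{0}^{\infty} f_{R}(R)\big(\int_{R}^{\infty}\frac{sP}{r^{\eta}+sP}\,r\,dr\big)dR$ with the mark expressed through $R$. Substituting $s=\theta/\rho_o$ makes $sP=\theta R^{\eta}$, and the change of variables $r=\theta^{1/\eta}Rt$ turns the inner integral into $\theta^{2/\eta}R^{2}\int_{\theta^{-1/\eta}}^{\infty}\frac{t}{t^{\eta}+1}\,dt$, which is exactly where the lower limit $\theta^{-1/\eta}$ in (\ref{out2}) originates.

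It then remains to take the expectation over the power mark. Since $R^{2}=(P/\rho_o)^{2/\eta}$, the outer $R$-integral is precisely $\rho_o^{-2/\eta}\mathbb{E}[P^{2/\eta}]$, and I would plug in the $\alpha=2/\eta$ moment from \textbf{Lemma~\ref{lem1}}, namely $\mathbb{E}[P^{2/\eta}]=\rho_o^{2/\eta}\gamma\!\big(2,\pi\lambda(P_u/\rho_o)^{2/\eta}\big)/\big[\pi\lambda(1-e^{-\pi\lambda(P_u/\rho_o)^{2/\eta}})\big]$. The factors of $\pi\lambda$ and $\rho_o^{2/\eta}$ cancel, collapsing the exponent of $\mathcal{L}_{\mathcal{I}}(\theta/\rho_o)$ to $-\frac{2\theta^{2/\eta}\gamma(2,\pi\lambda(P_u/\rho_o)^{2/\eta})}{1-e^{-\pi\lambda(P_u/\rho_o)^{2/\eta}}}\int_{\theta^{-1/\eta}}^{\infty}\frac{y}{y^{\eta}+1}dy$. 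Combining this with the noise term $e^{-\theta\sigma^{2}/\rho_o}$ through (\ref{SINR}) yields (\ref{out2}).

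I expect the main obstacle to be the second step: correctly encoding the interferer--tagged-BS correlation as the constraint $R\le r$ and carrying it cleanly through the marked PGFL, since this is the only place where the uplink geometry, and hence the nonzero lower limit of the integral, enters. One must also argue that the power marks may be treated as i.i.d.\ draws from $f_{P}$ with interferer intensity $\lambda$, which is where the \emph{partial} decorrelation of the approximation is invoked. The averaging over fading and the final moment substitution are then routine once the spatial integral is set up with the right limits.
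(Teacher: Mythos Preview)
Your proposal is correct and follows essentially the same route as the paper's proof. The paper encodes the interferer--tagged-BS correlation through the indicator $\mathbbm{1}\!\left(P_i\|u_i\|^{-\eta}<\rho_o\right)$, i.e., $\|u_i\|>(P_i/\rho_o)^{1/\eta}$, which is exactly your constraint $r\ge R$ after writing $P=\rho_o R^{\eta}$; it then applies the PGFL, averages over the exponential fading, performs the change of variable $y=x/(sP)^{1/\eta}$ (your $t=r/(\theta^{1/\eta}R)$), and substitutes $\mathbb{E}[P^{2/\eta}]$ from \textbf{Lemma~\ref{lem1}}---so the only difference is your parameterization via $R$ instead of $P$, which is cosmetic.
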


\begin{proof}
See \textbf{Appendix \ref{proof2}}.
\end{proof}

\vspace{0.2cm}
Generally, it can be shown that the {\rm SINR} outage $\mathcal{O}_s$ is non-increasing in $\rho_o$ (c.f. \ref{out_r}). Recalling (from Section \ref{tx}) that both the truncation outage and the average transmit power are non-decreasing in $\rho_o$, it is concluded that $\rho_o$ introduces a tradeoff in the system performance as will be shown in the numerical results.
 
\begin{figure}[t]
	\begin{center}
		\includegraphics[trim = 3cm 7.8cm 3cm 8cm , clip, width=2.7 in]{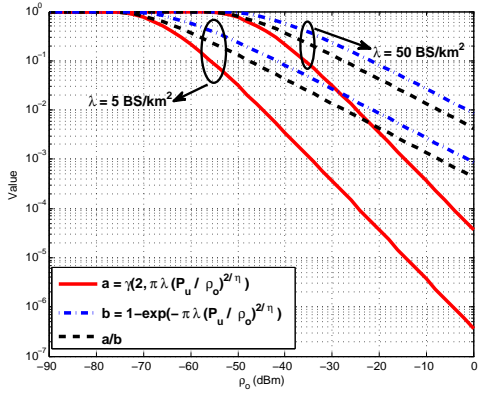}
	  \end{center}
	\caption{ The behavior of $\frac{\gamma(2, \pi \lambda \left(\frac{P_u}{\rho_o}\right)^\frac{2}{\eta})}{\left(1-\exp(- \pi \lambda \left(\frac{P_u}{\rho_o}\right)^\frac{2}{\eta})\right)}$  for $P_u=1$ and $\eta=4$.}
	\label{out_r}
\end{figure}

The {\rm SINR} statistics also controls the average uplink spectral efficiency obtained via Shannon's formula. The average uplink spectral efficiency can be obtained from the following theorem.

\vspace{0.2cm}
\begin{theorem}
\label{th2}
In a single-tier Poisson cellular network with truncated channel inversion power control with the cutoff threshold $\rho_o$, in the uplink, assuming that the interfering UEs constitute a PPP and that their transmit powers are independent, the average spectral efficiency of transmission by an active UE is given by 
\small
\begin{align}
\mathcal{R}&=  \int_{0}^{\infty} \frac{1}{x+1} \exp\left\{ -\frac{x \sigma^2}{\rho_o}  - \right. \notag \\ &\left. \frac{2  x^\frac{2}{\eta} \gamma\left(2, \pi \lambda \left(\frac{P_u}{\rho_o}\right)^\frac{2}{\eta}\right)}{\left(1-e^{-\pi \lambda \left(\frac{P_u}{\rho_o}\right)^{\frac{2}{\eta}}}\right)} \int_{x^\frac{-1}{\eta}}^{\infty} \frac{y}{y^{\eta}+1 }dy \right\}  dx.
\label{rate1}
\end{align}
\normalsize
\end{theorem}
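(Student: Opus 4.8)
The plan is to recognize that the average spectral efficiency computed via Shannon's formula is simply the expected value of $\ln\left(1+{\rm SINR}\right)$, and that this expectation reduces to a single integral of the complementary {\em cdf} (CCDF) of the {\rm SINR}, which has already been characterized in \textbf{Theorem~\ref{th1}}. Concretely, I would start from $\mathcal{R} = \mathbb{E}\left[\ln\left(1+{\rm SINR}\right)\right]$, where the expectation is taken over all sources of randomness, namely the fading gains, the transmit powers, and the spatial configuration of the interfering UEs. (The integrand $\frac{1}{x+1}$ in (\ref{rate1}), with no factor of $\ln 2$, confirms that the rate is measured in nats, so the natural logarithm is the correct choice here.)

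The key step is the standard identity that for any non-negative random variable $X$,
\begin{equation}
\mathbb{E}\left[\ln(1+X)\right] = \int_0^\infty \frac{\mathbb{P}\left\{X > x\right\}}{1+x}\, dx, \notag
\end{equation}
which follows by writing $\ln(1+X) = \int_0^X \frac{1}{1+x}\,dx$ and interchanging expectation and integration, an exchange that is justified by Tonelli's theorem since the integrand is non-negative. Applying this with $X = {\rm SINR}$ gives $\mathcal{R} = \int_0^\infty \frac{1}{1+x}\,\mathbb{P}\left\{{\rm SINR}>x\right\}\,dx$.

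The remaining step is to substitute the CCDF of the {\rm SINR}. \textbf{Theorem~\ref{th1}} provides the outage probability $\mathcal{O}_s = \mathbb{P}\left\{{\rm SINR}\leq\theta\right\}$ in closed form, so $\mathbb{P}\left\{{\rm SINR}>x\right\} = 1-\mathcal{O}_s\big|_{\theta=x}$ is exactly the exponential factor appearing in (\ref{out2}) with the threshold $\theta$ replaced by the integration variable $x$. Inserting this expression into the CCDF integral yields precisely (\ref{rate1}).

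Since all of the spatial and power-control machinery---in particular the Laplace transform of the aggregate interference under the PPP and independent-transmit-power approximations---has already been absorbed into \textbf{Theorem~\ref{th1}}, this result is essentially a corollary of it and involves no new probabilistic obstacle. The only point requiring mild care is the interchange of expectation and the inner integral, which is immediate by non-negativity; beyond that the derivation is a direct substitution. I do not expect the outer integral over $x$ to admit a closed form, which is exactly why the statement retains a single numerical integral.
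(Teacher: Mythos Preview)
Your proposal is correct and follows essentially the same approach as the paper: both reduce $\mathbb{E}[\ln(1+{\rm SINR})]$ to the integral $\int_0^\infty \frac{1}{1+x}\,\mathbb{P}\{{\rm SINR}>x\}\,dx$ and then substitute the CCDF furnished by \textbf{Theorem~\ref{th1}} (equivalently, the interference Laplace transform from its proof). The only cosmetic difference is that the paper reaches this integral via the tail formula $\mathbb{E}[\ln(1+{\rm SINR})]=\int_0^\infty \mathbb{P}\{\ln(1+{\rm SINR})>t\}\,dt$ followed by the substitution $x=e^t-1$, whereas you write $\ln(1+X)=\int_0^X\frac{dx}{1+x}$ and invoke Tonelli directly.
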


\begin{proof}
See \textbf{Appendix \ref{proof3}}.
\end{proof}

\vspace{0.2cm}
Equation (\ref{rate1}) incorporates the tradeoff between $\rho_o$ and spectral efficiency. On one hand, increasing $\rho_o$ improves the spectral efficiency and decreases the {\rm SINR} outage. On the other hand, increasing $\rho_o$ deteriorates the truncation outage probability and increases the transmit power  of the UEs. In the following, we show some interesting special cases that help understanding the uplink system performance. 

\vspace{0.2cm}
\subsubsection{Special case 1 (infinite $P_u$)} The case of infinite maximum transmit power $P_u$ is of particular interest because it captures the scenario where the transmit power is not a binding constraint for the uplink communication (cf. Fig.\ref{net2}). In other words, the BSs are dense enough (with respect to the required cutoff threshold) such that the distance between a generic UE and its serving BS is relatively small, and hence, the transmit power will be less than the maximum value $P_u$ almost surely. In the analysis, we capture this case by setting $P_u =\infty$. In this case, the {\rm SINR} is given by
\small
\begin{equation}
\mathcal{O}_s= 1-\exp\left\{ -\frac{\theta \sigma^2}{\rho_o}  - 2  \theta^\frac{2}{\eta} \int_{\theta^\frac{-1}{\eta}}^{\infty} \frac{y}{y^{\eta}+1 }dy \right\}  
\label{c1}
\end{equation}
\normalsize
and the spectral efficiency reduces to 
\small
\begin{equation}
\mathcal{R}= \int_0^\infty \frac{\exp\left\{ -\frac{x \sigma^2}{\rho_o}  - 2  x^\frac{2}{\eta} \int_{x^\frac{-1}{\eta}}^{\infty} \frac{y}{y^{\eta}+1 }dy \right\} }{x+1}dx.
\label{cr1}
\end{equation}
\normalsize

Equations (\ref{c1}) and (\ref{cr1}) show that when the transmit power is not a binding constraint, the {\rm SINR} outage and the average spectral efficiency in a single-tier cellular network are independent of the BS intensity. \textcolor{red}{This result is in compliance with the results in \cite{trac} and  it can be concluded that, in terms of {\rm SINR} outage and average spectral efficiency, both the uplink performance (in case of non-binding maximum transmit power constraint) and the downlink performance  are independent of the BS intensity}. 

It is quite insightful to see that the {\rm SINR} outage probability and the average spectral efficiency are independent of the intensity of the BSs. That is, when the maximum transmit power $P_u$ is not a binding operational constraint for the UEs, increasing the intensity (number) of the BSs neither degrades nor improves the {\rm SINR} outage probability and the average spectral efficiency. For the uplink, this behavior can be explained as follows: 
 since each BS will be serving one user per channel, as the intensity of the BSs increases, the intensity of the interfering UEs increases\footnote{This is due to the saturation condition assumption which occurs in a congested network scenario.}; 
 however, the average distance between a UE and its serving BS decreases which decreases the transmit power required to maintain the received signal power at $\rho_o$. Hence, the increased intensity of interfering UEs is compensated by the decreased transmit power per interfering UE, and the {\rm SINR} statistics does not change if $\frac{\rho_o}{\sigma^2}$ is kept constant, and vice versa. Therefore, the coverage probability and the average spectral efficiency  can only be improved through interference management techniques such as frequency reuse \cite{ffr1}, interference cancellation \cite{tony-up}, multiple-input-multiple-output (MIMO) antennas \cite{mimo}, interference avoidance via cognition \cite{our-cog2}, or multi-cell cooperation \cite{saquib}. Although these results are only valid for the PPP network model, they are insightful because they reflect the worst-case network performance. More specifically, deploying more BSs, in the worst case, will not degrade the {\rm SINR} statistics.  

\vspace{0.2cm}
\subsubsection{Special case 2 ($\eta=4$)} The integral in (\ref{out2}) can be expressed in closed-form for integer values of $\eta$. For instance, if $\eta=4$, the {\rm SINR} outage probability reduces to the following closed-form:

\small
\begin{equation}
\mathcal{O}_s= 1-\exp\left\{ -\frac{\theta \sigma^2}{\rho_o}  -   \frac{  \sqrt{\theta} \gamma\left(2, \pi \lambda \sqrt{\frac{P_u}{\rho_o}}\right)}{\left(1-e^{-\pi \lambda \sqrt{\frac{P_u}{\rho_o}}}\right)} \arctan(\sqrt{\theta })  \right\}  
\label{c2}
\end{equation}
\normalsize
and the spectral efficiency reduces to the following:
\small
\begin{equation}
\mathcal{R}= \int_0^\infty \frac{\exp\left\{ -\frac{x \sigma^2}{\rho_o}  -   \frac{  \sqrt{x} \gamma\left(2, \pi \lambda \sqrt{\frac{P_u}{\rho_o}}\right)}{\left(1-e^{-\pi \lambda \sqrt{\frac{P_u}{\rho_o}}}\right)} \arctan(\sqrt{x})  \right\}}{x+1}   dx.
\label{cr2}
\end{equation}
\normalsize

\vspace{0.2cm}
\subsubsection{Special case 3 ($\eta=4$, infinite $P_u$, interference-limited scenario)} In the interference-limited case, $\rho_o$ is assumed to be large enough such that $\frac{\theta \sigma^2}{\rho_o} << \mathcal{I}$ and hence noise can be ignored. In this case, the {\rm SINR} outage reduces to the following simple closed-form: 

\small
\begin{equation}
\mathcal{O}_s= 1-\exp\left\{   -  \sqrt{\theta} \arctan(\sqrt{\theta})  \right\}  
\label{c3}
\end{equation}
\normalsize
and the spectral efficiency reduces to
\small
\begin{align}
\mathcal{R} &= \int_{0}^\infty \frac{\exp\left\{   -  \sqrt{x}  \arctan(\sqrt{x})  \right\}}{x+1}  dx \notag \\
&= 0.77 \;\; \text{nat/sec/Hz}.
\label{cr3}
\end{align}
\normalsize

Equation (\ref{c3}) gives a closed-form expression for the outage probability in its simplest form and (\ref{cr3}) shows that the average rate reduces to a constant value\footnote{\textcolor{red}{The correspondence between this result and the result for the downlink case will be discussed in Section.~\ref{dicuss}}.} in this special case. These equations clearly show that when $P_u$ is not a binding constraint and the interference is much larger than the noise, both the {\rm SINR} outage and the spectral efficiency are independent of the cutoff threshold as well as the BS intensity.

\section{Uplink Modeling in Multi-tier Cellular Networks}\label{multi}

In this section, we show that the developed baseline model for uplink transmission in single-tier cellular networks can be extended for uplink transmission in multi-tier cellular networks. First, we present the analysis for multi-tier cellular networks with common path-loss exponent. Then, we will show and comment on the case with different path-loss exponents in the different tiers.

\subsection{Common Path-loss Exponent}

The developed model naturally captures multi-tier cellular networks with different intensities $\lambda_k$ and different cutoff thresholds $\rho_o^{(k)}$ for the different tiers but with a common path-loss exponent $\eta$. In this case, the distribution of the transmit power for  the active UEs is obtained from the following lemma.

\vspace{0.2cm}
\begin{lemma}
\label{lem2}
In a $K$-tier Poisson cellular network with a common path-loss exponent $\eta$ and a truncated channel inversion power control where each tier has the BS intensity of $\lambda_k$ and the cutoff threshold $\rho^{(k)}_o$, the distribution of the transmit power of the active UEs in the uplink in the $j^{th}$ tier is given by
 \begin{align}
f_{P_j}(x) &=  \frac{2 \pi \Lambda x^{\frac{2}{\eta}-1} e^{-\pi \Lambda \left(\frac{x}{\rho^{(j)}_o}\right)^{\frac{2}{\eta}}}}{\eta  (\rho^{(j)}_o)^{\frac{2}{\eta}} \left(1-e^{-\pi \Lambda \left(\frac{P_u}{\rho^{(j)}_o}\right)^{\frac{2}{\eta}}}\right)}, 0\leq x \leq P_u \notag 
\end{align}
where $\Lambda = \sum_{k=1}^K \lambda_k$. The moments of the transmit power of a UE in the $j^{th}$ tier can be obtained as 
\begin{equation}
\mathbb{E}\left[P_j^\alpha\right] =   \frac{(\rho^{(j)}_o)^\alpha \gamma\left(\frac{\alpha \eta}{2}+1, \pi \Lambda \left(\frac{P_u}{\rho^{(j)}_o}\right)^\frac{2}{\eta}\right)}{(\pi \Lambda)^{\frac{\alpha \eta}{2}}\left(1-e^{-\pi \Lambda \left(\frac{P_u}{\rho^{(j)}_o}\right)^{\frac{2}{\eta}}}\right)}.  
\end{equation}
\end{lemma}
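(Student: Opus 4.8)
The plan is to reduce the multi-tier analysis to the single-tier computation of \textbf{Lemma~\ref{lem1}} by exploiting the superposition and coloring properties of independent Poisson point processes~\cite{martin-book}. First I would invoke the superposition theorem: since the tier processes $\mathbf{\Psi}_1,\dots,\mathbf{\Psi}_K$ are mutually independent homogeneous PPPs, their union $\mathbf{\Psi}=\bigcup_{k=1}^K \mathbf{\Psi}_k$ is itself a homogeneous PPP with aggregate intensity $\Lambda=\sum_{k=1}^K\lambda_k$. Because a generic UE associates with its nearest BS over \emph{all} tiers, the relevant serving distance is the distance $R$ to the nearest point of $\mathbf{\Psi}$, whose complementary cdf is the void probability $\mathbb{P}\{R>r\}=e^{-\pi\Lambda r^2}$.

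The crux of the argument is to show that, conditioned on the event that the serving BS lies in tier $j$, the serving distance $R$ still has the density $2\pi\Lambda r\,e^{-\pi\Lambda r^2}$, i.e.\ that it is independent of the tier label. I would establish this via the coloring theorem: the union $\mathbf{\Psi}$ may be viewed as a PPP of intensity $\Lambda$ in which each point is independently labelled as belonging to tier $k$ with probability $\lambda_k/\Lambda$, independently of its location. Consequently the label of the nearest point is independent of its distance, so $\mathbb{P}\{R\le r \mid \text{tier}=j\}=\mathbb{P}\{R\le r\}=1-e^{-\pi\Lambda r^2}$, and the conditional serving-distance density coincides with that of the single-tier model after the replacement $\lambda\mapsto\Lambda$.

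With this reduction in hand, the remaining steps mirror the proof of \textbf{Lemma~\ref{lem1}} step by step. A UE associated to tier $j$ inverts the path loss to the threshold $\rho^{(j)}_o$, so its transmit power is $P_j=\rho^{(j)}_o R^\eta$, and it is active precisely when $P_j\le P_u$, equivalently $R\le (P_u/\rho^{(j)}_o)^{1/\eta}$. I would apply the monotone change of variables $x=\rho^{(j)}_o r^\eta$ to the conditional density of $R$ and then renormalize over the active set; the normalizing constant is the activity probability $1-e^{-\pi\Lambda (P_u/\rho^{(j)}_o)^{2/\eta}}$, which supplies the denominator in the claimed $f_{P_j}(x)$. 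The moment expression then follows by integrating $x^\alpha f_{P_j}(x)$ over $[0,P_u]$ and performing the substitution $t=\pi\Lambda (x/\rho^{(j)}_o)^{2/\eta}$, which recasts the integral as the lower incomplete gamma function $\gamma\!\left(\tfrac{\alpha\eta}{2}+1,\ \pi\Lambda (P_u/\rho^{(j)}_o)^{2/\eta}\right)$.

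I expect the conditional-independence step to be the only real obstacle: one must justify that restricting attention to the UEs served by tier $j$ does not bias the serving-distance distribution, which is exactly the content of the coloring theorem. Once that is secured, every quantity is obtained from \textbf{Lemma~\ref{lem1}} through the two substitutions $\lambda\mapsto\Lambda$ and $\rho_o\mapsto\rho^{(j)}_o$, so no genuinely new calculation beyond the single-tier case is required.
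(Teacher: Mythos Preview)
Your proposal is correct and follows essentially the same route as the paper: invoke PPP superposition so that the nearest-BS distance is governed by a single PPP of intensity $\Lambda$, then reuse the computation of \textbf{Lemma~\ref{lem1}} with the substitutions $\lambda\mapsto\Lambda$ and $\rho_o\mapsto\rho_o^{(j)}$. Your explicit use of the coloring theorem to justify that conditioning on the serving tier does not bias the serving-distance law is a welcome refinement; the paper's proof is terser and simply notes that, with a common path-loss exponent and homogeneous UE powers, the association regions form an ordinary Voronoi tessellation of the superposed PPP, then defers to \textbf{Appendix~\ref{proof1}} without separately addressing the tier-conditioning step.
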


\begin{proof}
 Since the UEs associate to the BSs based on the average uplink link quality, following the superposition theorem of the PPP \cite{martin-book}, the association regions for the BSs form a Voronoi tessellation for a PPP with intensity $\Lambda$. Note that, for the uplink in a multi-tier network, the association regions form a Voronoi tessellation rather than a weighted Voronoi tessellation due to the homogenous transmit powers of the UEs. The rest of the proof follows the same steps as in \textbf{Appendix \ref{proof1}}.
\end{proof}

\vspace{0.2cm}
The truncation outage probability at the $j^{th}$ tier is given by
\begin{equation} 
\mathcal{O}^{(j)}_p = e^{-\pi \Lambda \left(\frac{P_u}{\rho^{(j)}_o}\right)^{\frac{2}{\eta}}}.
\label{op2}
\end{equation} 

Without loss of generality, let {\rm SINR$_j$} be the {\rm SINR} experienced by a tagged BS in the $j^{th}$ tier and the BS is located at the origin. By  Slivnyak's theorem, the analysis holds for any BS in the $j^{th}$ tier. Let each tier has its own {\rm SINR} threshold $\theta_k$, then, following (\ref{SINR}), the {\rm SINR} outage in the $j^{th}$ tier can be expressed as

\begin{align}
\mathbb{P}\left\{{\rm SINR_j} \leq \theta_j \right\} &= \mathbb{E}\left[1-\exp\left\{ -\frac{\theta_j}{\rho^{(j)}_o} \left(\sigma^2 + \sum_{k=1}^K {\mathcal{I}_k}\right)\right\}\right]   \notag \\
&= 1-\exp\left\{ -\frac{\theta_j}{\rho^{(j)}_o} \sigma^2\right\} \prod_{k=1}^K \mathcal{L}_{\mathcal{I}_k}\left(\frac{\theta_j}{\rho^{(j)}_o}\right)  
\label{SINR3}
\end{align}
where $\mathcal{I}_k$ is the aggregate interference from the $k^{th}$ tier, and the second line of equation (\ref{SINR3}) follows from the independence of $\mathcal{I}_k$ and $\mathcal{I}_i$ $\forall i \neq k$. Note that $\mathcal{I}_j$ represents the co-tier interference and $\mathcal{I}_k$ for $k \neq j$ represents the cross-tier interference. \textcolor{red}{As discussed in the single-tier case, the interfering UEs do not constitute a PPP and their transmit powers are correlated. However, for analytical tractability, similar to the single-tier case, we approximate the interfering UEs in each tier with a PPP and ignore the transmit power correlations}. The {\rm SINR} outage for a generic active UE in the $j^{th}$ tier is then given by the following theorem.

\vspace{0.2cm}
\begin{theorem}
\label{th3}
In a $K$-tier Poisson cellular network with a common path-loss exponent $\eta$ and truncated channel inversion power control where each tier has BS intensity $\lambda_k$ and  cutoff threshold $\rho^{(k)}_o$, in the uplink, assuming that the interfering UEs in each tier constitute an independent PPP and that their transmit powers are independent, the {\rm SINR} outage probability for a generic active UE  in the $j^{th}$ tier is given by 
\small
\begin{align}
\mathcal{O}^{(j)}_s &= 1-\exp\left\{ -\frac{\theta_j \sigma^2}{\rho^{(j)}_o}  -\sum_{k=1}^K  \left(\frac{\theta_j  (\rho^{(k)}_o)}{\rho^{(j)}_o}\right)^\frac{2}{\eta} \right. \notag \\ & \left. \frac{2  \lambda_k  \gamma\left(2, \pi \Lambda \left(\frac{P_u}{\rho^{(k)}_o}\right)^\frac{2}{\eta}\right)}{\Lambda \left(1-e^{-\pi \Lambda \left(\frac{P_u}{\rho^{(k)}_o}\right)^{\frac{2}{\eta}}}\right)} \int_{\left(\frac{\theta_j \rho^{(k)}_o}{\rho^{(j)}_o}\right)^\frac{-1}{\eta}}^{\infty} \frac{y}{y^{\eta}+1 }dy \right\}. 
\label{out3}
\end{align}
\normalsize
\end{theorem}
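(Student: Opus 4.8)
The plan is to start from the factorization already established in (\ref{SINR3}), which reduces the problem to computing the Laplace transform $\mathcal{L}_{\mathcal{I}_k}\!\left(\theta_j/\rho^{(j)}_o\right)$ of the per-tier aggregate interference for each $k$ and then multiplying these together. Since the interfering UEs of tier $k$ are approximated by an independent PPP of intensity $\lambda_k$ (as assumed in the statement), and since the fading gains and transmit powers are independent, I would invoke the probability generating functional (PGFL) of the PPP exactly as in the single-tier derivation of \textbf{Appendix \ref{proof2}}. Writing $s=\theta_j/\rho^{(j)}_o$, denoting by $r$ the distance from an interferer to the tagged tier-$j$ BS, and averaging over the unit-mean exponential fading first so that $\mathbb{E}_h[e^{-sPhr^{-\eta}}]=(1+sPr^{-\eta})^{-1}$, in polar coordinates I obtain
\begin{align}
\mathcal{L}_{\mathcal{I}_k}(s) = \exp\left\{-2\pi\lambda_k \int \mathbb{E}_{P}\!\left[\frac{sPr^{-\eta}}{1+sPr^{-\eta}}\right] r\,dr\right\}, \notag
\end{align}
where the expectation over the transmit power $P$ is taken with respect to the tier-$k$ density of \textbf{Lemma \ref{lem2}}.

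The crucial step --- and the one I expect to be the main obstacle --- is identifying the correct lower limit of the spatial integral, which is precisely where the retained correlation with the tagged BS enters. Because UEs associate with their nearest BS \emph{across all tiers}, an interfering UE in tier $k$ served by a tier-$k$ BS at distance $R$ (so that $P=\rho^{(k)}_o R^\eta$ by channel inversion) must be closer to that serving BS than to the tagged tier-$j$ BS at distance $r$; hence $R<r$, i.e. $r>\left(P/\rho^{(k)}_o\right)^{1/\eta}$. This exclusion region is what makes the inner integral run from $(P/\rho^{(k)}_o)^{1/\eta}$ rather than from $0$, and it is the only mechanism by which the cross-tier thresholds couple. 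The substitution $v=r/(sP)^{1/\eta}$ then maps the lower limit to $\left(\rho^{(k)}_o s\right)^{-1/\eta}=\left(\theta_j\rho^{(k)}_o/\rho^{(j)}_o\right)^{-1/\eta}$ --- exactly the lower limit appearing in (\ref{out3}) --- while isolating a clean power of $P$:
\begin{align}
\int_{(P/\rho^{(k)}_o)^{1/\eta}}^{\infty}\frac{sPr^{-\eta}}{1+sPr^{-\eta}}\,r\,dr = (sP)^{2/\eta}\int_{\left(\frac{\theta_j\rho^{(k)}_o}{\rho^{(j)}_o}\right)^{-1/\eta}}^{\infty}\frac{v}{v^{\eta}+1}\,dv. \notag
\end{align}

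With the $P$-dependence reduced to the factor $(sP)^{2/\eta}$, the remaining average over the transmit power is exactly $\mathbb{E}[P_k^{2/\eta}]$, which follows from the moment formula in \textbf{Lemma \ref{lem2}} upon setting $\alpha=2/\eta$; since then $\alpha\eta/2=1$, the lower incomplete gamma function specializes to $\gamma(2,\cdot)$ and I get $\mathbb{E}[P_k^{2/\eta}]=(\rho^{(k)}_o)^{2/\eta}\gamma(2,\pi\Lambda(P_u/\rho^{(k)}_o)^{2/\eta})/[\pi\Lambda(1-e^{-\pi\Lambda(P_u/\rho^{(k)}_o)^{2/\eta}})]$. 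Substituting this back and using $s\rho^{(k)}_o=\theta_j\rho^{(k)}_o/\rho^{(j)}_o$ collapses the $2\pi\lambda_k$, the $s^{2/\eta}$, and the $\pi\Lambda$ in the denominator into the single tier-$k$ exponent $(\theta_j\rho^{(k)}_o/\rho^{(j)}_o)^{2/\eta}\,\tfrac{2\lambda_k}{\Lambda}\,\tfrac{\gamma(2,\cdot)}{1-e^{-\cdot}}\int\frac{y}{y^\eta+1}dy$. Finally, because the per-tier interferences $\mathcal{I}_k$ are independent, the product $\prod_{k=1}^{K}\mathcal{L}_{\mathcal{I}_k}$ turns into a sum over $k$ in the exponent, and inserting this together with the noise term $\theta_j\sigma^2/\rho^{(j)}_o$ into (\ref{SINR3}) gives (\ref{out3}). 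The whole argument is thus a tier-indexed replica of the single-tier proof; the only genuinely new bookkeeping is that the transmit-power statistics of tier $k$ depend on the aggregate intensity $\Lambda$ (the association tessellation is the Voronoi diagram of the superposed process, per \textbf{Lemma \ref{lem2}}), whereas the interferer count scales with $\lambda_k$, which is exactly why the factor $\lambda_k/\Lambda$ survives in the final expression.
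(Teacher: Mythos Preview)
Your proposal is correct and follows essentially the same route as the paper's own proof in \textbf{Appendix~\ref{proofth3}}: both start from the factorization (\ref{SINR3}), encode the correlation with the tagged BS through the exclusion region $\|u_i\|>(P_{ki}/\rho^{(k)}_o)^{1/\eta}$ (which the paper phrases as \textbf{Fact~\#2}), apply the PGFL of the tier-$k$ PPP, perform the change of variables to isolate $P_k^{2/\eta}$, and then substitute $\mathbb{E}[P_k^{2/\eta}]$ from \textbf{Lemma~\ref{lem2}}. Your write-up is in fact more explicit than the paper's terse appendix, particularly in justifying the exclusion radius via the nearest-BS association and in tracking how the $\lambda_k/\Lambda$ factor arises.
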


\begin{proof}
See \textbf{Appendix \ref{proofth3}}.
\end{proof}

\vspace{0.2cm}
The {\rm SINR} outage in (\ref{out3}) reduces to the single-tier case given in (\ref{out2}), but with intensity $\Lambda$, when all tiers have the same cutoff threshold $\rho_o$ despite of their 
different intensities. The spectral efficiency in the multi-tier case can be obtained from the following theorem.

\vspace{0.2cm}
\begin{theorem}
\label{th4}
In a $K$-tier Poisson cellular network with a common path-loss exponent $\eta$ and truncated channel inversion power control where each tier has BS intensity $\lambda_k$ and cutoff threshold $\rho^{(k)}_o$,  in the uplink, assuming that the interfering UEs in each tier constitute a independent PPP and that their transmit powers are independent, the average spectral efficiency of transmission by an active UE in the $j^{th}$ tier is given by  
\small
\begin{align}
\mathcal{R}_j&=  \int_{0}^{\infty} \frac{1}{x+1} \exp\left\{ -\frac{x \sigma^2}{\rho^{(j)}_o}  -  \sum_{k=1}^K \left(\frac{x \rho^{(k)}_o }{\rho^{(j)}_o}\right)^\frac{2}{\eta} \right. \notag \\ & \left. \frac{2 \lambda_k  \gamma\left(2, \pi \Lambda \left(\frac{P_u}{\rho^{(k)}_o}\right)^\frac{2}{\eta}\right)}{\Lambda \left(1-e^{-\pi \Lambda \left(\frac{P_u}{\rho^{(k)}_o}\right)^{\frac{2}{\eta}}}\right)} \int_{\left(\frac{x \rho^{(k)}_o}{\rho^{(j)}_o}\right)^\frac{-1}{\eta}}^{\infty} \frac{y}{y^{\eta}+1 }dy \right\}   dx.
\label{rate2}
\end{align}
\normalsize
\end{theorem}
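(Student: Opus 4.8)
The plan is to identify the average spectral efficiency with the ergodic rate $\mathcal{R}_j = \mathbb{E}\left[\ln\left(1+{\rm SINR}_j\right)\right]$ and to evaluate this expectation directly from the complementary distribution of ${\rm SINR}_j$, which has already been characterized in \textbf{Theorem~\ref{th3}}.

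First I would apply the standard tail identity for the expectation of an increasing function of a non-negative random variable. Because ${\rm SINR}_j \ge 0$ almost surely and $g(x)=\ln(1+x)$ obeys $g(0)=0$ and $g'(x)=\frac{1}{1+x}$, writing $g({\rm SINR}_j)=\int_0^{{\rm SINR}_j} g'(x)\,dx$ and exchanging the expectation with the integral (legitimate by Tonelli's theorem, since the integrand is non-negative) gives
\[
\mathbb{E}\left[\ln\left(1+{\rm SINR}_j\right)\right] = \int_0^\infty \frac{\mathbb{P}\left\{{\rm SINR}_j > x\right\}}{1+x}\, dx .
\]

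The second step is to recognize that the SINR outage expression $\mathcal{O}^{(j)}_s$ derived in \textbf{Theorem~\ref{th3}} was obtained for an arbitrary SINR threshold; treating that threshold as a free variable and writing $\mathbb{P}\left\{{\rm SINR}_j > x\right\} = 1 - \mathcal{O}^{(j)}_s\big|_{\theta_j = x}$ furnishes the complementary distribution, i.e. the exponential factor in (\ref{out3}) with $\theta_j$ replaced throughout by $x$. Substituting this into the tail integral above and retaining the factor $\frac{1}{1+x}$ reproduces (\ref{rate2}) term by term.

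Because all of the stochastic-geometry machinery --- the per-tier PPP approximation, the Laplace transform of each aggregate interference term $\mathcal{I}_k$ via the probability generating functional of the PPP, and the averaging over the independent transmit powers distributed as in \textbf{Lemma~\ref{lem2}} --- has already been discharged in establishing (\ref{out3}), there is no substantive obstacle remaining in this theorem. The proof is essentially mechanical, and the only points warranting care are the mild regularity needed for the tail identity (non-negativity of ${\rm SINR}_j$ together with the exponential decay of the integrand, which guarantees the rate integral converges) and the observation that the outage formula of \textbf{Theorem~\ref{th3}} may be legitimately evaluated at the running variable $x$ rather than at the fixed $\theta_j$.
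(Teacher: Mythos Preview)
Your proposal is correct and mirrors the paper's own argument: the paper applies the tail formula $\mathbb{E}[\ln(1+{\rm SINR}_j)]=\int_0^\infty \mathbb{P}\{\ln(1+{\rm SINR}_j)>t\}\,dt$, substitutes the coverage probability from (\ref{SINR3}) and (\ref{lt2}), and then performs the change of variables $x=e^t-1$, which yields exactly the integral identity $\int_0^\infty \frac{\mathbb{P}\{{\rm SINR}_j>x\}}{1+x}\,dx$ you derived directly via $g'(x)=\frac{1}{1+x}$. The two routes differ only in the order of the substitution and are otherwise identical.
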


\begin{proof}
The proof is similar to the one in \textbf{Appendix \ref{proof3}}.
\end{proof}

\vspace{0.2cm}
For brevity, we consider only one special case. For infinite $P_u$, interference-limited network scenario, and $\eta=4$, the {\rm SINR} outage and average spectral efficiency in the uplink in the $j^{th}$ tier reduce to the following:

\small
\begin{align}
\mathcal{O}^{(j)}_s &= 1-\exp\left\{- \sum_{k=1}^K  \frac{ \lambda_k}{\Lambda}  \sqrt{\frac{\theta_j \rho^{(k)}_o}{\rho^{(j)}_o}}\arctan\left({\sqrt{\frac{\theta_j \rho^{(k)}_o}{\rho^{(j)}_o}}}\right) \right\}  
\label{mc1}
\end{align}
\normalsize

\small
\begin{align}
\mathcal{R}_j&=  \int_{0}^{\infty} \frac{\exp\left\{- \sum_{k=1}^K  \frac{ \lambda_k}{\Lambda}  \sqrt{\frac{x \rho^{(k)}_o}{\rho^{(j)}_o}}\arctan\left({\sqrt{\frac{x \rho^{(k)}_o}{\rho^{(j)}_o}}}\right)  \right\}}{x+1} dx.
\label{mcr1}
\end{align}
\normalsize

Equations (\ref{mc1}) and (\ref{mcr1}) show that, in general, the {\rm SINR} outage probability and spectral efficiency in a certain tier depend on the relative cutoff thresholds and the relative BS intensities. The {\rm SINR} outage probability decreases and the spectral efficiency improves as the cutoff threshold in the target tier increases and the cutoff thresholds in the other tiers and/or the BS intensities in the other tiers decrease. This is because, a higher cutoff threshold in the target tier increases the useful signal power and lower cutoff thresholds and/or lower BS intensities in the other tiers reduce the cross-tier interference. 

\textbf{Theorems~\ref{th3}} and \textbf{\ref{th4}} show an important difference between multi-tier and single-tier cellular networks. That is, regardless of the maximum transmit power value $P_u$ (i.e., binding or non-binding maximum transmit power constraint), the {\rm SINR} outage probability and spectral efficiency in multi-tier cellular networks depend on the relative values of the cutoff thresholds and the relative BS intensities. However, note that if all the tiers have the same cutoff threshold, regardless of the relative BSs intensities,  for an interference-limited scenario with $\eta=4$ and infinite $P_u$, (\ref{mc1}) and (\ref{mcr1}) reduce to (\ref{c3}) and (\ref{cr3}), respectively. Hence, the multi-tier cellular network can be reduced to a single-tier network with intensity $\Lambda$, and both the {\rm SINR} outage probability and the average spectral efficiency become independent of the BS intensities in the different tiers  if $P_u$ is non-binding. It is worth mentioning that similar performance  was observed in \cite{trac2} for the downlink multi-tier case. \textcolor{red}{In particular, it was shown that if each UE associates to the BS offering the highest average received power, the outage probability and spectral efficiency become independent of the relative intensities  and transmit powers of BSs in different network tiers}. 


\subsection{Different Path-loss Exponents}

In the previous section, it has been shown that when all network tiers share the same path-loss exponent, the association of the UEs does not change from that in the single-tier cellular networks. That is, the association regions of the BSs will form a Voronoi tessellation. On the other hand, if different tiers have different path-loss exponents, the association of UEs in the multi-tier case deviates from that in the single-tier case where the association regions of the BSs form a weighted Voronoi tessellation. That is, the BSs in tiers with lower path-loss exponents will have larger service areas than BSs in tiers with higher path-loss exponents. The transmit power in a multi-tier cellular network with different path-loss exponents can be characterized by the following lemma. 

\vspace{0.2cm}
\begin{lemma}
\label{lem5}
In a $K$-tier Poisson cellular network with truncated channel inversion power control where each tier has BS intensity $\lambda_k$, cutoff threshold $\rho^{(k)}_o$, and path-loss exponent $\eta_k$, the {\em pdf} of the transmit power of the active UEs in the uplink in the $j^{th}$ tier is given by
\begin{align}
f_{P_j}(x) &=   \frac{\sum_{k=1}^{K} \frac{2 \pi \lambda_k x^{\frac{2}{\eta_k}-1} }{\eta_k (\rho_o^{(j)})^\frac{2}{\eta_k} }}{1-e^{-\sum_{b=1}^{K}\pi \lambda_b \left(\frac{P_u}{\rho^{(j)}_o}\right)^{\frac{2}{\eta_b}}}} {e^{-\sum_{a=1}^{K}\pi \lambda_a \left(\frac{x}{\rho_o^{(j)}}\right)^{\frac{2}{\eta_a}}}}.
\end{align}
The moments of the  transmit power of a UE in the $k^{th}$ tier can be obtained as 
\begin{equation}
\mathbb{E}\left[P_j^\alpha\right] =  \int_0^{P_u} \frac{\sum_{k=1}^{K} \frac{2 \pi \lambda_k x^{\frac{2}{\eta_k}+\alpha-1} }{\eta_k (\rho_o^{(j)})^\frac{2}{\eta_k} }}{1-e^{-\sum_{b=1}^{K}\pi \lambda_b \left(\frac{P_u}{\rho^{(j)}_o}\right)^{\frac{2}{\eta_b}}}} {e^{-\sum_{a=1}^{K}\pi \lambda_a \left(\frac{x}{\rho_o^{(j)}}\right)^{\frac{2}{\eta_a}}}}.  
\label{differ}
\end{equation}
\end{lemma}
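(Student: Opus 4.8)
The plan is to mirror the single-tier argument of Appendix \ref{proof1}, replacing the single nearest-BS distance by the minimum propagation path loss taken over the $K$ independent tier processes. First I would fix a typical UE and, for each tier $k$, let $R_k$ denote the distance to its nearest tier-$k$ BS. Because $\mathbf{\Psi}_k$ is a homogeneous PPP of intensity $\lambda_k$, the void probability gives $\mathbb{P}\{R_k > r\} = e^{-\pi\lambda_k r^2}$, and the $R_k$ are mutually independent across tiers by the independence of the $\mathbf{\Psi}_k$. Under association by best average link quality, the serving BS minimizes the path loss $r^{\eta_k}$, so the effective path loss seen by the UE is $\ell = \min_k R_k^{\eta_k}$; a UE served by tier $j$ then inverts the channel to its target $\rho_o^{(j)}$ and transmits with power $P_j = \rho_o^{(j)}\,\ell$.

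Next I would compute the complementary CDF of $P_j$ by exploiting the tier independence. Since $P_j > x$ holds iff every tier-$k$ BS lies farther than the distance $(x/\rho_o^{(j)})^{1/\eta_k}$ at which its path loss would require transmit power $x$, the event factorizes over tiers:
\[
\mathbb{P}\{P_j > x\} \;=\; \prod_{k=1}^{K}\mathbb{P}\Big\{R_k > (x/\rho_o^{(j)})^{1/\eta_k}\Big\} \;=\; \exp\!\Big(-\sum_{k=1}^{K}\pi\lambda_k\,(x/\rho_o^{(j)})^{2/\eta_k}\Big).
\]
A UE is active iff $P_j \le P_u$, so I would condition on this event by dividing the CDF by $1 - \exp(-\sum_{b}\pi\lambda_b (P_u/\rho_o^{(j)})^{2/\eta_b})$, yielding the truncated CDF on $[0,P_u]$. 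Differentiating in $x$ — the chain rule turns the single exponent into the sum $\sum_{k}\tfrac{2\pi\lambda_k}{\eta_k}(\rho_o^{(j)})^{-2/\eta_k}x^{2/\eta_k-1}$ multiplying the common exponential — gives exactly $f_{P_j}(x)$. The moment expression (\ref{differ}) then follows immediately as $\mathbb{E}[P_j^\alpha]=\int_0^{P_u}x^\alpha f_{P_j}(x)\,dx$, which no longer collapses to a gamma function because the mixed powers $2/\eta_k$ prevent a closed form.

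I expect the delicate step to be the treatment of the serving tier. Conditioning the transmit power on the UE being served by tier $j$ should, strictly, restrict to the event that tier $j$ attains the minimum path loss; in the common-exponent case of \textbf{Lemma \ref{lem2}} this conditioning is harmless, because the merged process is a single PPP whose nearest-point distance is independent of the (independently marked) tier of that point, so $\ell$ and the serving tier decouple exactly. With distinct $\eta_k$ the map $R_k\mapsto R_k^{\eta_k}$ is tier-dependent, hence $\ell=\min_k R_k^{\eta_k}$ is no longer independent of $\arg\min_k R_k^{\eta_k}$, and the factorized expression above is the law of $\rho_o^{(j)}\ell$ for a typical UE rather than for one conditioned on tier $j$. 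I would therefore justify using the unconditioned minimum path loss as the working approximation, consistent with the weighted-Voronoi association and to be confirmed by simulation, and remark that it reduces exactly to \textbf{Lemma \ref{lem2}} when all $\eta_k=\eta$ and to \textbf{Lemma \ref{lem1}} when $K=1$.
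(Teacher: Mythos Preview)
Your proposal is correct and follows essentially the same route as the paper's proof: define the per-tier nearest distances $R_k$, set the effective path loss to $\min_k R_k^{\eta_k}$, use independence of the tier processes to get the product-of-voids expression for the complementary CDF, normalize by the activity probability, and differentiate. Your explicit discussion of the tier-$j$ conditioning subtlety is in fact more careful than the paper, which simply writes $r_j^{\eta_j}=\min_k r_k^{\eta_k}$ given tier-$j$ association and then proceeds with the unconditioned law of the minimum without comment.
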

\begin{proof}
See \textbf{Appendix \ref{lemm5}}.
\end{proof}

\vspace{0.2cm}
The truncation outage probability in the uplink for a UE at the $j^{th}$ tier is given by
\begin{equation} 
\mathcal{O}^{(j)}_p = e^{-\sum_{k=1}^{K}\pi \lambda_k \left(\frac{P_u}{\rho^{(k)}_o}\right)^{\frac{2}{\eta_k}}}.
\label{op3}
\end{equation} 

From \textbf{Lemma~\ref{lem5}} we can see that the moments of the transmit power of the UEs cannot be obtained in closed-form due to the complications introduced by the different path-loss exponents. Note that, since the transmit power appears in the LT of the interference as $\mathbb{E}[P_k^{\frac{2}{\eta_k}}]$, the calculations of outage probability and spectral efficiency  do not require obtaining the moments of the transmit power in closed-forms. It can be shown that, for a common path-loss exponent, \textbf{Lemma~\ref{lem5}} reduces to \textbf{Lemma~\ref{lem2}}. Similar to the previous cases, the interfering UEs do not constitute a PPP and their transmit powers are correlated. However, for analytical tractability, we will approximate the locations of the interfering UEs by a PPP and ignore the correlations in transmit power. The {\rm SINR} outage in multi-tier cellular networks with different path-loss exponents can be characterized by the following theorem. 

\vspace{0.2cm}
\begin{theorem}
\label{th5}
In a $K$-tier Poisson cellular network with truncated channel inversion power control where each tier has BS intensity $\lambda_k$, cutoff threshold $\rho^{(k)}_o$, and path-loss exponent $\eta_k$, assuming that the interfering UEs in each tier constitute an independent PPP and that their transmit powers are independent, the {\rm SINR} outage probability in the uplink for a generic active UE in the $j^{th}$ tier is given by 
\small
\begin{align}
\mathcal{O}^{(j)}_s &= 1-\exp\left\{ -\frac{\theta_j \sigma^2}{\rho^{(j)}_o}  - \sum_{k=1}^K  2 \pi  \lambda_k \left(\frac{\theta_j}{\rho^{(j)}_o}\right)^\frac{2}{\eta_j}   \right. \notag \\ &\left. \quad \quad \quad \quad \mathbb{E}_{P_k}\left[ P_k^{\frac{2}{\eta_j}}\right] \int_{\left(\frac{\theta_j {\rho^{(k)}_o}}{\rho^{(j)}_o}\right)^{\frac{-1}{\eta_j}}}^{\infty} \frac{y}{y^{\eta_j}+1 }dy \right\}.
\label{out4}
\end{align}
\normalsize
\end{theorem}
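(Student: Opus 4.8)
The plan is to follow the template established in the single-tier SINR analysis (\textbf{Appendix \ref{proof2}} / \textbf{Theorem~\ref{th1}}) and its common-exponent extension (\textbf{Theorem~\ref{th3}}), changing only the steps that are sensitive to the per-tier path-loss exponents. Starting from the conditional outage expression (\ref{SINR}), the first step is to write the outage at a tagged tier-$j$ BS as
\[
\mathcal{O}^{(j)}_s = 1 - \exp\left\{-\frac{\theta_j \sigma^2}{\rho^{(j)}_o}\right\} \prod_{k=1}^{K} \mathcal{L}_{\mathcal{I}_k}\left(\frac{\theta_j}{\rho^{(j)}_o}\right),
\]
exactly as in (\ref{SINR3}); the product form again follows because the aggregate interferences $\{\mathcal{I}_k\}$ from the different tiers are mutually independent once the per-tier interferers are approximated by independent PPPs. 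Since each tier-$k$ BS serves one UE per channel, I take the tier-$k$ interferers to form a PPP of intensity $\lambda_k$ with transmit powers $P_k$ distributed as in \textbf{Lemma~\ref{lem5}}, and --- crucially for the heterogeneous case --- I adopt the receiver-centric convention that every signal impinging on the tagged tier-$j$ BS is attenuated with exponent $\eta_j$, so a tier-$k$ interferer at distance $r$ contributes $P_k h r^{-\eta_j}$.

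The second step is to evaluate each $\mathcal{L}_{\mathcal{I}_k}(s)$ at $s=\theta_j/\rho^{(j)}_o$ through the probability generating functional of the PPP. Integrating out the Rayleigh fading ($h$ exponential, unit mean) via $\mathbb{E}_h[e^{-sP_k h r^{-\eta_j}}]=(1+sP_k r^{-\eta_j})^{-1}$ gives
\[
-\ln \mathcal{L}_{\mathcal{I}_k}(s) = 2\pi\lambda_k\, \mathbb{E}_{P_k}\left[\int_{\ell_k}^{\infty} \frac{sP_k\, r}{r^{\eta_j}+sP_k}\, dr\right],
\]
where $\ell_k$ is the lower limit imposed by the retained correlation. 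This $\ell_k$ is the only place where the heterogeneity of exponents enters nontrivially: the interferer is served by its own tier-$k$ BS, and the correlation kept by the model is that it must lie no closer to the tagged BS than the radius at which its received power --- measured with the tagged exponent $\eta_j$ --- would equal its own cutoff $\rho^{(k)}_o$, i.e. $\ell_k=(P_k/\rho^{(k)}_o)^{1/\eta_j}$.

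The third step is the change of variable $y=r\,(sP_k)^{-1/\eta_j}$, which factors the inner integral as $(sP_k)^{2/\eta_j}\int_{\ell_k (sP_k)^{-1/\eta_j}}^{\infty}\frac{y}{y^{\eta_j}+1}\,dy$. The purpose of the particular $\ell_k$ is that, after substituting $s=\theta_j/\rho^{(j)}_o$, the lower limit collapses to the $P_k$-independent value $\bigl(\theta_j\rho^{(k)}_o/\rho^{(j)}_o\bigr)^{-1/\eta_j}$, so the $y$-integral leaves the expectation, and the prefactor $(sP_k)^{2/\eta_j}=(\theta_j/\rho^{(j)}_o)^{2/\eta_j}P_k^{2/\eta_j}$ reduces the power average to $\mathbb{E}_{P_k}[P_k^{2/\eta_j}]$. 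This yields
\[
-\ln\mathcal{L}_{\mathcal{I}_k}(s) = 2\pi\lambda_k \left(\frac{\theta_j}{\rho^{(j)}_o}\right)^{\frac{2}{\eta_j}} \mathbb{E}_{P_k}\!\left[P_k^{\frac{2}{\eta_j}}\right]\int_{\left(\theta_j\rho^{(k)}_o/\rho^{(j)}_o\right)^{-1/\eta_j}}^{\infty}\frac{y}{y^{\eta_j}+1}\,dy,
\]
and summing over $k$ inside the exponential, with the noise term $\theta_j\sigma^2/\rho^{(j)}_o$ reinstated, gives (\ref{out4}).

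Finally, I expect the genuinely delicate point to be the justification of $\ell_k$. Unlike the common-exponent case, the moment $\mathbb{E}_{P_k}[P_k^{2/\eta_j}]$ does not collapse to an incomplete-gamma closed form because the \textbf{Lemma~\ref{lem5}} density is a sum of terms with distinct exponents, so it is simply left as the one-dimensional integral of the form in (\ref{differ}); this is harmless for numerical evaluation and is not the obstacle. The obstacle is that with unequal exponents the weighted-Voronoi association has no single clean distance characterizing it, so the exclusion condition ``received power at the tagged BS $\le\rho^{(k)}_o$'' is an approximation rather than an exact geometric constraint. The argument must therefore verify that this choice (i) reduces to the exact nearest-BS (Voronoi) condition $r\ge(P_k/\rho^{(k)}_o)^{1/\eta}$ when all $\eta_k=\eta$, thereby recovering (\ref{out3}), and (ii) remains accurate against simulation, which is where care is needed.
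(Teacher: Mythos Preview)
Your proposal is correct and follows essentially the same route as the paper's own proof: the paper writes the tier-$k$ interference at a tagged tier-$j$ BS as $\mathcal{I}_k = \sum \mathbbm{1}(P_{ki}\|u_i\|^{-\eta_j} < \rho^{(k)}_o)\, P_{ki} h_i \|u_i\|^{-\eta_j}$, i.e., exactly your receiver-centric exponent $\eta_j$ together with your exclusion radius $\ell_k = (P_k/\rho^{(k)}_o)^{1/\eta_j}$, and then says the remainder proceeds as in \textbf{Appendix~\ref{proofth3}}. Your write-up in fact supplies more of the intermediate manipulations (the PGFL step, the fading integration, and the change of variable $y=r(sP_k)^{-1/\eta_j}$) than the paper's brief sketch, and your closing remarks about $\mathbb{E}_{P_k}[P_k^{2/\eta_j}]$ not closing and about the approximate nature of the exclusion under unequal exponents are consistent with the paper's discussion surrounding \textbf{Lemma~\ref{lem5}} and \textbf{Theorem~\ref{th5}}.
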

\begin{proof}
See \textbf{Appendix \ref{proof5}}.
\end{proof}

\vspace{0.2cm}
Note that $\mathbb{E}_{P_k}\left[ P_k^{\frac{2}{\eta_j}}\right]$ is just a number that can be obtained numerically from (\ref{differ}). Hence, (\ref{out4}), as shown earlier, can be obtained in closed-form for integer values of $\eta_j$. The spectral efficiency can be characterized via the following theorem.

\vspace{0.2cm}
\begin{theorem}
\label{th6}
In a $K$-tier Poisson cellular network with truncated channel inversion power control where each tier has the BS intensity $\lambda_k$, cutoff threshold $\rho^{(k)}_o$, and path-loss exponent $\eta_k$, assuming that the interfering UEs in each tier constitute an independent PPP and that their transmit powers are independent, the average spectral efficiency of transmission in the uplink by an active UE in the $j^{th}$ tier is given by 
\small
\begin{align}
\mathcal{R}_j&=  \int_{0}^{\infty} \frac{1}{x+1} \exp\left\{ -\frac{\theta_j \sigma^2}{\rho^{(j)}_o}  - \sum_{k=1}^K  2 \pi  \lambda_k \left(\frac{x}{\rho^{(j)}_o}\right)^\frac{2}{\eta_j}   \right. \notag \\ &\left. \quad \quad \quad \quad \mathbb{E}_{P_k}\left[ P_k^{\frac{2}{\eta_k}} \right] \int_{\left(\frac{x {\rho^{(k)}_o}}{\rho^{(j)}_o}\right)^{\frac{-1}{\eta_j}}}^{\infty} \frac{y}{y^{\eta_j}+1 }dy  \right\}   dx.
\label{rate4}
\end{align}
\normalsize
\end{theorem}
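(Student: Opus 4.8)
The plan is to derive $\mathcal{R}_j$ directly from the complementary cumulative distribution function of ${\rm SINR}_j$ that is already established in the proof of \textbf{Theorem~\ref{th5}}, mirroring the way \textbf{Theorem~\ref{th2}} follows from \textbf{Theorem~\ref{th1}} in \textbf{Appendix~\ref{proof3}}. By Shannon's formula, the average spectral efficiency of an active UE in the $j^{th}$ tier is $\mathcal{R}_j = \mathbb{E}\left[\ln\left(1+{\rm SINR}_j\right)\right]$, where the expectation is taken over the fading, the (independent) transmit powers of the interferers, and their PPP-approximated locations.

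First I would use the standard tail representation of the expectation of a logarithm. Writing $\ln(1+y)=\int_0^y \frac{dt}{1+t}$ for $y\geq 0$ and exchanging the order of integration by Tonelli's theorem (all integrands are non-negative) gives
\[
\mathcal{R}_j = \int_0^\infty \frac{\mathbb{P}\left\{{\rm SINR}_j > x\right\}}{1+x}\,dx,
\]
which replaces the single-threshold outage by the entire complementary cdf evaluated at a dummy variable $x$.

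Next I would observe that the success probability $\mathbb{P}\{{\rm SINR}_j > x\}$ equals $1-\mathcal{O}^{(j)}_s$ with the threshold $\theta_j$ replaced everywhere by $x$. This is because none of the steps in \textbf{Appendix~\ref{proof5}} — the conditioning on the tagged link, the exponential arising from Rayleigh fading, the per-tier PPP approximation, the assumed independence of the transmit powers, and the resulting product of Laplace transforms $\prod_{k} \mathcal{L}_{\mathcal{I}_k}$ — depends on the particular value of the threshold; the argument therefore carries over verbatim with $x$ in place of $\theta_j$ in every occurrence, including the noise term $\tfrac{\theta_j\sigma^2}{\rho^{(j)}_o}$. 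Substituting the resulting exponential (i.e.\ the bracket in (\ref{out4}) with $\theta_j\mapsto x$) into the tail integral above yields (\ref{rate4}).

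The one point requiring care, and thus the main (though mild) obstacle, is justifying the interchange between the tail integral over $x$ and the spatial and fading expectations that produce the factorized Laplace transforms; this is again secured by Tonelli's theorem owing to the non-negativity of all the integrands. Beyond that, the result is a direct substitution into an already-derived closed form, which is why the proof need only parallel \textbf{Appendix~\ref{proof3}}.
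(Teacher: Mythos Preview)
Your proposal is correct and matches the paper's own argument, which merely points to \textbf{Appendix~\ref{proof3}}: obtain the tail representation $\mathcal{R}_j=\int_0^\infty \tfrac{1}{1+x}\,\mathbb{P}\{{\rm SINR}_j>x\}\,dx$ and substitute the success probability from \textbf{Theorem~\ref{th5}} with $\theta_j\mapsto x$. The only cosmetic difference is that you reach the tail formula via $\ln(1+y)=\int_0^y\tfrac{dt}{1+t}$ and Tonelli, whereas the paper applies $\mathbb{E}[\ln(1+{\rm SINR})]=\int_0^\infty\mathbb{P}\{\ln(1+{\rm SINR})>t\}\,dt$ followed by the substitution $x=e^t-1$; the two routes are equivalent.
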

\begin{proof}
The proof is similar to that in \textbf{Appendix \ref{proof3}}. 
\end{proof}
\vspace{0.2cm}
Due to the complicated expressions for the moments of the transmit power as given in \textbf{Lemma~\ref{lem5}}, \textbf{Theorems~\ref{th5}} and~\textbf{\ref{th6}} do not give simple formulas for the {\rm SINR} outage and spectral efficiency. However, \textbf{Theorems~\ref{th5}} and~\textbf{\ref{th6}} do give general formulas that reduce to all of the previously presented special cases. Note that without simple expressions, significant insights may not be extracted from the results obtained in \textbf{Theorems~\ref{th5}} and ~\textbf{\ref{th6}} and \textbf{Lemma~\ref{lem5}} for multi-tier networks with different path-loss exponents. In contrast, with a common path-loss exponent, \textbf{Theorems~\ref{th5}} and ~\textbf{\ref{th6}} and \textbf{Lemma~\ref{lem5}} simplify to  \textbf{Theorems~\ref{th3}} and ~\textbf{\ref{th4}} and \textbf{Lemma~\ref{lem2}}, respectively. The main conclusion from this section is that the developed paradigm is general and flexible to capture different practical system parameters.


\section{Results and Discussions} \label{resss}

\subsection{Results}

In this section, we validate our model against simulations and present some numerical results for a single-tier cellular network (or equivalently, a multi-tier cellular network with common cutoff threshold $\rho_o$ and path-loss exponent $\eta$). Unless otherwise stated, we set the BS intensity to $\lambda=2$ BSs/km$^2$, the maximum transmit power $P_u=1$ W, the receiver sensitivity $\rho_{min} = -90$ dBm, the cutoff threshold $\rho=-70$ dBm, the {\rm SINR} threshold $\theta=1$, $\sigma^2 = -90$ dBm, and the number of channels $\left|\mathbf{S}\right|=1$. In the simulation setup, we realize a PPP cellular network with intensity $\lambda$ in a 400 km$^2$ area. Then we drop the UEs randomly and uniformly over the simulation area until all BSs are active (i.e., each BS has a scheduled UE for which $P_u$ is sufficient to invert its path-loss and communicate in the uplink) to ensure that the saturation condition is satisfied. All UEs employ the channel inversion power control. The simulation is repeated 10000 times. 

 First, we validate our model against simulation results obtained for a Poisson cellular network and compare it with the circular approximation of the target BS coverage  as used in \cite{spec-share, and-d2d}. The reason for the model validation is that the derived {\em cdf} of the {\rm SINR} assumes that the set of active UEs constitutes a PPP and that their transmit powers are independent. Note that our proposed analysis captures the correlations between the location of the tagged BS  and the locations of the interfering UEs as well as the correlations between the transmit power of the tagged UE  and the transmit powers of the interfering UEs (cf. \textbf{Facts $\#1$--$\#3$} in \textbf{Appendix~\ref{proof2}}). Hence, our model partially ignores the correlations imposed by the uplink system model. 

 
 Fig.~\ref{SINR1} shows that the derived model accurately captures the {\rm SINR} outage. The figure also shows that approximating the coverage area of the tagged BS with a circle with radius $\frac{1}{\sqrt{\pi \lambda}}$ underestimates the outage at $\rho_o \leq -70$ dBm and overestimates the outage at  $\rho_o = -90$ dBm. The reason is that at a low cutoff threshold $\rho_o$, the interference exclusion region around each BS is large (cf. \textbf{Fact $\#2$} in \textbf{Appendix~\ref{proof2}}), and hence, the radius $\frac{1}{\sqrt{\pi \lambda}}$ estimates a more aggressive interference. On the other hand, for a high cutoff threshold $\rho_o$, the interference exclusion region around the tagged BS is small, and hence, the radius $\frac{1}{\sqrt{\pi \lambda}}$ estimates a more conservative interference.

\begin{figure}[t]
	\begin{center}
\includegraphics[width=0.35 \textwidth]{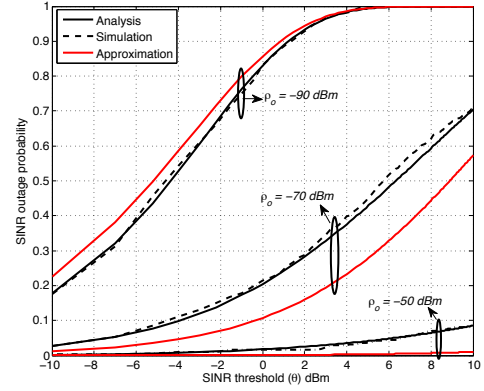}
	\end{center}
	\caption{The {\em cdf} of the {\rm SINR} for $\lambda=2$ BS/km$^2$ and $\left|\mathbf{S}\right|=1$.}
	\label{SINR1}
\end{figure}

Fig.~\ref{SINR2} shows the tradeoff introduced by $\rho_o$ on the total outage probability $\mathcal{O}_t = \mathcal{O}_p + (1-\mathcal{O}_p) \mathcal{O}_s$. As shown in the figure, $\rho_o$ tunes the tradeoff between the two outage probabilities and there exists an optimal cutoff threshold $\rho^{*}_o$ that minimizes the total outage probability. That is, at lower values of $\rho_o$, the {\rm SINR} outage dominates the outage probability. On the other hand, at high values of $\rho_o$, the truncation outage dominates the outage probability. The figure shows the two regions of operation for the uplink, namely, when $P_u$ is a binding constraint and when $P_u$ is a non-binding constraint. Note that $P_u$ becomes binding when the truncation outage probability  is non-zero. For small values of $\rho_o$, $P_u$ induces a non-binding constraint, and hence, for relatively high BS intensity (e.g., $\lambda = 10$ and $\lambda = 100$) the {\rm SINR} outage is independent of the BS intensity (i.e., the two curves for  $\lambda = 10$ and $\lambda = 100$ coincides). This reinforces case $\#1$ in Section \ref{outage}. Note that when $\rho_o$ is comparable to the noise power $\sigma^2$, the {\rm SINR} outage depends on $\rho_o$. However, when $\rho_o$ is much greater than the noise power and $P_u$ is non-binding, the {\rm SINR} outage is independent of both the cutoff threshold $\rho_o$ and BS intensity (case $\#3$ in Section \ref{outage}). For high values of $\rho_o$, $P_u$ becomes a binding constraint and the {\rm SINR} outage depends on both the BS intensity and cutoff threshold. 


\begin{figure}[t]
	\begin{center}
\includegraphics[ width=0.35 \textwidth]{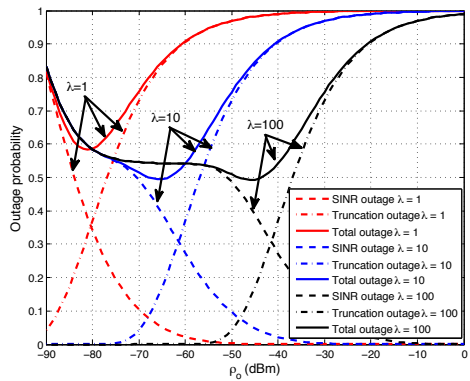}
	\end{center}
	\caption{ Total outage probability for $\theta=1$ and $\left|\mathbf{S}\right|=1$.}
	\label{SINR2}
\end{figure}

To show the two regions of operation for the uplink more clearly, we plot Fig.~\ref{low_noise} with a negligible noise power value of  $\sigma^2=-110$ dBm. This figure shows that when $P_u$ is non-binding (i.e., the truncation outage probability is almost zero), the {\rm SINR} outage is completely independent of both the cutoff threshold $\rho_o$ and BS intensity (case $\#3$ in Section \ref{outage}). Note that the independence w.r.t. the BS intensity can be seen from the coincidence of the curves for $\lambda=10$ and $\lambda =100$ as long as $P_u$ is non-binding for both intensities (i.e., the truncation outage probability is equal to zero). In contrast, when $P_u$ becomes binding (i.e., the truncation outage probability is non-zero), the {\rm SINR} outage is highly affected by both the BS intensity and cutoff threshold (case $\#2$ in Section \ref{outage}).

\begin{figure}[t]
	\begin{center}
\includegraphics[ width=0.35 \textwidth]{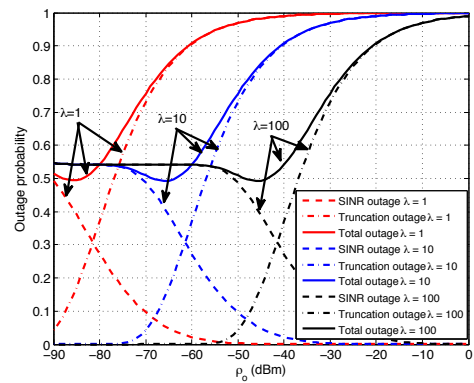}
	\end{center}
	\caption{ Total outage probability for $\sigma^2=-110$ dBm, $\theta=1$, and $\left|\mathbf{S}\right|=1$.}
	\label{low_noise}
\end{figure}

Fig.~\ref{spec} shows the tradeoff introduced by $\rho_o$ on the effective spectral efficiency defined as $(1-\mathcal{O}_p) \mathcal{R}$. The effective spectral efficiency captures the average spectral efficiency for active users (i.e., users with no truncation outage).  As shown in this figure, when $P_u$ is a non-binding constraint (i.e., for $\rho_o \leq -75$ with $\lambda = 10$ and $\rho_o \leq -55$ with $\lambda=100$  [cf. Fig.~\ref{SINR2}]), the effective spectral efficiency is independent of the BS intensity. Note that the effective spectral efficiency depends on $\rho_o$ when $P_u$ induces a non-binding constraint due to the relatively (i.e., relative to $\rho_o$) high noise power  (case $\#1$ in Section \ref{outage}). However, for $\lambda = 100$ the effective spectral efficiency is independent of the cutoff threshold in the range of $-70 \leq \rho_o \leq -55$ dBm because $P_u$ induces a non-binding constraint and the noise power is negligible w.r.t. the value of $\rho_o$  (case $\#3$ in Section \ref{outage}). On the other hand, when $P_u$ becomes a binding constraint, the effective spectral efficiency depends on both the BS intensity and cutoff threshold. This figure also shows the existence of an optimal cutoff threshold $\rho^*_o$ which maximizes the effective spectral efficiency.

\begin{figure}[t]
	\begin{center}
\includegraphics[width=0.35 \textwidth]{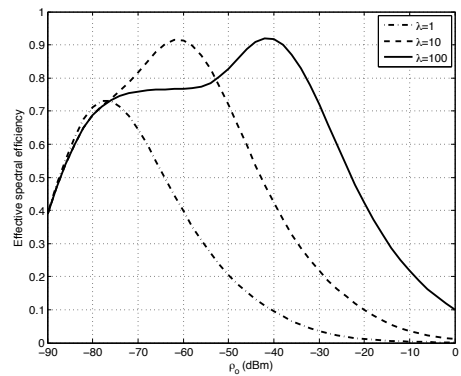}
	\end{center}
	\caption{  Effective spectral efficiency $\sigma^2 = -90$ dBm and $\left|\mathbf{S}\right|=1$.}
	\label{spec}
\end{figure}

To show the tradeoff introduced by $\rho_o$ on the effective spectral efficiency more clearly, we plot Fig.~\ref{spec_noise} for negligible noise power $\sigma^2= -110$ dBm. This figure shows that when $P_u$ induces a non-binding constraint, the effective spectral efficiency is completely independent of the BS intensity and cutoff threshold. On the other hand, when $P_u$ induces a binding constraint, the effective spectral efficiency depends on both the BS intensity and the cutoff threshold. 

\begin{figure}[t]
	\begin{center}
\includegraphics[width=0.35 \textwidth]{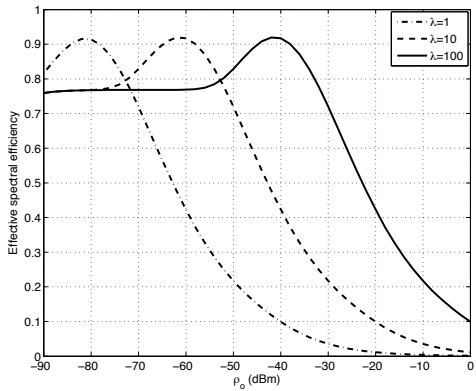}
	\end{center}
	\caption{  Effective spectral efficiency for $\sigma^2 = -110$ dBm, $\frac{\mathcal{U}}{\lambda}=50$, and $\left|\mathbf{S}\right|=1$.}
	\label{spec_noise}
\end{figure}

 Fig.~\ref{Epc} shows the average transmit power of the UEs vs. the cutoff threshold $\rho_o$. As the cutoff threshold $\rho_o$ increases, the UEs are required to transmit at higher powers to invert the path-loss and maintain a high threshold $\rho_o$ at their serving BSs. Therefore, the average transmit power is non-decreasing in $\rho_o$. Note that the average transmit power saturates at $\underset{\rho_o \rightarrow \infty}\lim \mathbb{E}[P] = \frac{P_u}{3}$.

\begin{figure}[t]
	\begin{center}
\includegraphics[width=0.35 \textwidth]{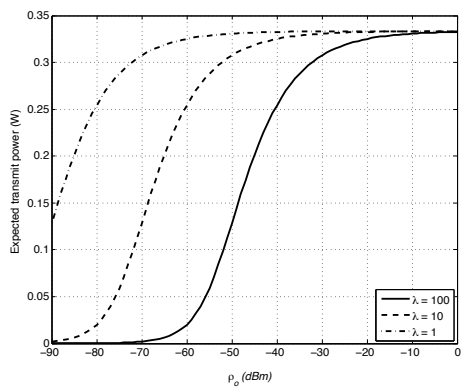}
	\end{center}
	\caption{ Expected transmit power in the uplink.}
	\label{Epc}
\end{figure}

\subsection{Discussions}\label{dicuss}

In the light of the proposed uplink framework and the results provided in \cite{trac, trac2}, we highlight the commonalities and differences between the uplink and downlink transmission performances. The criticality of the power control is the first main difference between the uplink and downlink transmission performances. While power control enhances the downlink transmission performance, it is not crucial for the basic network operation due to the inherent interference protection provided by the user association policy. On the other hand, power control is essential for the case of uplink operation to mitigate potential severe interference caused by the arbitrary close interfering UEs (cf. Fig.~\ref{per_chan}). 

\textcolor{red}{The second difference between the uplink and downlink operation is the maximum transmit power constraint for the UEs. For a single-tier cellular network, the relative values of $\lambda$, $P_u$, and $\rho_o$ may lead to a binding maximum transmit power constraint for the uplink operation. In this case, the uplink performance is highly affected by the intensity of the BSs as well as the cutoff threshold for power control. On the other hand, when the relative values of $\lambda$, $P_u$, and $\rho_o$ lead to an uplink operation under non-binding maximum transmit power constraint, the uplink transmission performance is analogous to the downlink transmission performance. More specifically, the uplink network can be transformed into an equivalent network where all UEs have a constant transmit power of $E\left[P^{\frac{2}{\eta}}\right]$ (see (\ref{lt1}))\footnote{\textcolor{red}{The expression for LT of the interference  in  (\ref{lt1}) shows the equivalence of the interference distributions in the uplink and downlink transmission scenarios where the randomness in the transmit  power is captured via its fractional moments in the expression for LT  as discussed in \cite{req1, req2}.}}
. Hence, the uplink performance with non-binding $P_u$ is similar to the downlink performance with BS transmit power of $E\left[P^{\frac{2}{\eta}}\right]$ and a constant downlink distance of $\left(\frac{E[P^{\frac{2}{\eta}}]}{\rho_o}\right)^\frac{1}{\eta}$. It is important to note that the similarity between the uplink performance (with non-binding $P_u$) and the downlink performance is due to the power control in the uplink which enforces an interference protection for the BSs. This is analogous to the interference protection imposed by the association in the downlink for the UEs. However, the power control maintains a fixed power level of $\rho_o$ at the BSs which increases the {\rm SINR} outage probability (e.g., as seen by comparing (\ref{c3}) to the downlink outage given by $1- \left(1+\sqrt{\theta}\arctan{{\sqrt{\theta}}}\right)^{-1}$ \cite{trac}), and reduces the spectral efficiency (e.g., $0.77$ nat/sec/Hz obtained in (\ref{cr3}) for the uplink when compared to the $1.49$ nat/sec/Hz  obtained for the downlink in~\cite{trac}). A better performance is achieved in the downlink because  the UEs which are close to their serving BSs experience high received power (w.r.t. the fixed $\rho_o$ in the uplink) from their serving BSs. To this end, it can be concluded that, in a single-tier cellular network with a loose maximum transmit power constraint, the {\rm SINR} outage and spectral efficiency performances for uplink communication  are independent of the BS intensity. Therefore, in the worst case (due to the PPP assumption), deploying more BSs will not affect the {\rm SINR} statistics in either the uplink or the downlink.}


For multi-tier cellular networks,  regardless of whether $P_u$ is binding or not, the  uplink performance depends on the relative cutoff thresholds and relative BS intensities in the network tiers. However, in a multi-tier cellular network  with a common cutoff threshold, a common path-loss exponent, and a loose maximum transmit power constraint,  the {\rm SINR} outage and the spectral efficiency performances for uplink communications  are independent of the BS intensity. Therefore, in the worst case  (because of the PPP assumption), deploying more BSs or more tiers will not affect the {\rm SINR} statistics in either the uplink or the downlink. 


In addition to its practicality, the channel inversion power control with the cutoff threshold highly facilitates the analysis in this paper and leads to simple expressions for the outage probability and rate. However, the presented analysis can be considered as a special case of the fractional channel inversion power control (i.e., full inversion) and cannot be used to capture the fractional channel inversion power control. Nevertheless, the presented analysis provides design insights and helps understanding  the uplink system behavior. In addition, in contrast to fractional channel inversion power control, the channel inversion power control with the cutoff threshold results in a location independent outage probability and average rate for active users (i.e., homogenous performance for all active users). The fractional channel inversion power control imposes a location dependent performance for the active users. Furthermore, the fractional channel inversion power control highly complicates the system model, and hence, necessitates more approximations as in \cite{and-uplink} in order to maintain the tractability.  Although \cite{and-uplink} provided a mathematically elegant technique to deal with uplink cellular networks with fractional channel inversion power control, our results reveal that the model provided in \cite{and-uplink} will not be accurate when considering the maximum transmit power constraint for the UEs and the UEs operating in a binding maximum transmit power scenario. This is because, in \cite{and-uplink} the system model was approximated and it was assumed that the Voronoi cells are realized w.r.t. the users not the BSs. In other words,  the uplink system model was converted to an equivalent downlink (i.e., the users are treated as BSs) which, as we have shown here, is inaccurate if $P_u$ is a binding operational constraint. Moreover, the model presented in \cite{and-uplink} resulted in a relatively more complex expression for the outage probability\footnote{Note that, the expressions presented in \cite{and-uplink} for the outage probability and rate do not reduce to the expression presented in this paper due to the approximation in system model used in \cite{and-uplink}.}.

\section{Conclusion}

We have introduced a novel  modeling approach for uplink transmission in Poisson cellular networks with truncated channel inversion power control. The model assumes a practical  system model and accounts for power control, maximum transmit power of the UEs, and cutoff threshold for the power control. The developed modeling approach is general and captures the uplink performance of multi-tier cellular networks. The results show that the cutoff threshold introduces tradeoffs in the system performance and there exists an optimal cutoff threshold that minimizes the outage probability. When multiple outage optimal cutoff thresholds exist, the minimum cutoff threshold minimizes the average transmit power  of the users. Approximate yet accurate simple expressions for the outage probability and spectral efficiency, which reduce to closed forms is some special cases, have been obtained. 

We have characterized the uplink performance and showed the existence of a transfer point for the uplink operation which depends on the relative values of the BS intensity, the maximum transmit power of the UEs, and the cutoff threshold for power control. When the BSs are dense enough such that the maximum transmit power is not a binding constraint to establish an uplink connection with the nearest BS, the {\rm SINR} outage and the spectral efficiency are independent of the BS intensity. Since the maximum transmit power $P_u$ is, in general, not a binding constraint in dense cellular networks, deploying more tiers and more BSs may decrease the transmit power consumption for UEs, improve the spatial frequency reuse and system capacity, but will neither improve nor deteriorate the {\rm SINR} outage probability and spectral efficiency. Hence, the {\rm SINR} outage probability and the spectral efficiency could  only be improved via interference management/avoidance techniques.

\appendices

\section{Proof of Lemma~\ref{lem1}} \label{proof1}

For a generic UE $u$, let $r_o=\underset{m_i \in \mathbf{\Psi}}{\min}\left\|u-m_i\right\|$. The uplink distance $r_o$ has the Rayleigh distribution $f_{r_o}(r) = 2 \pi \lambda r e^{-\pi \lambda r^2}$, $0 \leq r \leq \infty$ \cite{trac, martin-book}. The transmit power for a generic UE is given by $P = \rho_o r_o^{\eta}$ such that $0 \leq P \leq P_u$. Therefore, the {\em pdf} of $P$ is given by
 \begin{align}
f_{P}(x) &=  \frac{{2 \pi \lambda} x^{\frac{2}{\eta}-1} e^{-\pi \lambda \left(\frac{x}{\rho_o}\right)^{\frac{2}{\eta}}}}{ {\eta \rho_o^{\frac{2}{\eta}}} \int_0^{P_u} \frac{2 \pi \lambda}{\eta  \rho_o^{\frac{2}{\eta}}} y^{\frac{2}{\eta}-1} e^{-\pi \lambda  \left(\frac{y}{\rho_o}\right)^{\frac{2}{\eta}}} dy} \notag \\
&=  \frac{2 \pi \lambda x^{\frac{2}{\eta}-1} e^{-\pi \lambda \left(\frac{x}{\rho_o}\right)^{\frac{2}{\eta}}}}{\eta  \rho_o^{\frac{2}{\eta}} \left(1-e^{-\pi \lambda \left(\frac{P_u}{\rho_o}\right)^{\frac{2}{\eta}}}\right)}, 0\leq x \leq P_u. \notag 
\end{align}
Note that the {\em pdf} of $P$ is normalized due to the truncated channel inversion power control. The $\alpha^{th}$ moment of $P$ is given by $\int_0^{P_u} x^\alpha f_{P}(x) dx$ and the lemma is obtained.

\section{Proof of Theorem~\ref{th1}} \label{proof2}

For the interference experienced by a cellular uplink, we find the Laplace Transform  (LT) of the aggregate interference at a tagged BS located at the origin. Note that orthogonal channel assignment per BS brings correlations among the locations of the interfering UEs as well as with the location of the tagged BS, which highly complicates the analysis. The derivation here is based on the three facts and a key assumption listed below: 

\begin{itemize}
\item \textbf{Fact $\#1$}: the average useful signal received at any BS is equal to the cutoff threshold $\rho_o$.

\item \textbf{Fact $\#2$}: the average interference received from any single interfering UE is strictly less than $\rho_o$.

\item \textbf{Fact $\#3$}: at any time instant each BS will have a single user served per channel, and hence, the intensity of interfering UEs on each channel is $\lambda$. 

\item \textbf{Key assumption}: \textcolor{red}{the interfering UEs  constitute a PPP and their transmit powers are independent}. 
\end{itemize}

Note that \textbf{Fact $\#1$} and \textbf{Fact $\#2$} are a direct consequence of the association policy and power control, while \textbf{Fact $\#3$} holds because each BS assigns a unique channel for each of its associated users. Hence, the aggregate interference received at the tagged BS can be written as
\begin{equation} 
\mathcal{I} = \underset{u_i \in \tilde{\mathbf{\Phi}}\setminus \{o\}}{\sum} \mathbbm{1}\left(P_i \left\|u_i\right\|^{-\eta} < \rho_o\right) P_i h_i \left\|u_i\right\|^{-\eta}
\label{corri}
\end{equation}
where $\tilde{\mathbf{\Psi}}$ is a PPP with intensity $ \lambda$ representing the interfering UEs, and $\mathbbm{1}(.)$ is the indicator function which takes the value $1$ if the statement $(.)$ is true and zero otherwise. Note that the indicator function in (\ref{corri}) captures the correlation between the interfering UEs and the tagged BS. The LT of the aggregate interference from the interfering UEs received at the tagged BS is obtained as
\small
\begin{align}
  & \mathcal{L}_{\mathcal{I}}(s) = \mathbb{E}\left[e^{-  s \mathcal{I}}\right]\notag \\
	&= \mathbb{E}\left[e^{-  s\underset{u_i \in \tilde{\mathbf{\Phi}}\setminus \{o\}}{\sum} \mathbbm{1}\left(P_i \left\|u_i\right\|^{-\eta} < \rho_o\right) P_i h_i \left\|u_i\right\|^{-\eta}}\right]\notag \\
&\stackrel{(i)}{=}  \mathbb{E}_{\tilde{\mathbf{\Phi}}}\left[\underset{u_i \in \tilde{\mathbf{\Phi}}\setminus \{o\}}{\prod} \mathbb{E}_{P_i,h_i}\left[e^{-s \mathbbm{1}\left( \left\|u_i\right\|> \left(\frac{P_i}{\rho_o}\right)^\frac{1}{\eta} \right) P_i h_i \left\|u_i\right\|^{-\eta}}\right]\right]\notag \\
 &\stackrel{(ii)}{=}  \exp\left( - 2 \pi   \lambda \int_{\left(\frac{P}{\rho_o}\right)^\frac{1}{\eta}}^{\infty}\mathbb{E}_{P,h}\left[  \left(1- e^{-s P h x^{-\eta}}\right)\right] xdx \right)\notag \\
 & \stackrel{(iii)}{=}  \exp\left( - 2 \pi    \lambda\int_{\left(\frac{P}{\rho_o}\right)^\frac{1}{\eta}}^{\infty}\mathbb{E}_{P}\left[  \left(1- \frac{1}{1+s P x^{-\eta}}\right)\right] xdx \right)\notag \\
&\stackrel{(iv)}{=} \exp\left( - {2 \pi    \lambda} {s^\frac{2}{\eta}} \mathbb{E}_{P}\left[ P^\frac{2}{\eta}\right] \int_{(s \rho_o)^\frac{-1}{\eta}}^{\infty} \frac{y}{y^{\eta}+1 }dy \right) 
\label{lt1}
\end{align}
\normalsize
where $\mathbb{E}_{x}[.]$ is the expectation with respect to the random variable $x$, $(i)$ follows from independence between $\tilde{\mathbf{\Phi}}$, $P_i$, and $h_i$, $(ii)$ follows from the probability generation functional (PGFL) of the PPP \cite{martin-book}, $(iii)$ follows from the LT of $h$, and $(iv)$ obtained by changing the variables $y=\frac{x}{(SP)\frac{1}{\eta}}$. The theorem is obtained be substituting (\ref{lt1}) in (\ref{SINR1}) for $s=\frac{\theta}{\rho_o}$ and substituting the value of $\mathbb{E}[P^{\frac{2}{\eta}}]$ from \textbf{Lemma~\ref{lem1}}. It is worth mentioning that $(i)$ follows from the assumption of i.i.d. transmit power, and $(ii)$ follows from the PPP assumption for the UEs' locations (i.e., using the PGFL of the PPP).

\section{Proof of Theorem~\ref{th2}} \label{proof3}

Since the {\rm SINR} is a strictly positive random variable, the average spectral efficiency can be obtained as

\begin{align}
\mathcal{R} &=\mathbb{E}[\ln\left(1+{\rm SINR}\right)] \notag \\
 &= \int_0^{\infty} \mathbb{P}\left\{\ln\left(1+{\rm SINR}\right) > t\right\} dt \notag \\
&\stackrel{(v)}{=} \int_0^{\infty} e^{-\frac{\left(e^t-1\right) \sigma^2}{\rho_o} } \mathcal{L}_{\mathcal{I}}\left(\frac{ \left(e^t-1\right)}{\rho_o}\right) dt \notag \\
&\stackrel{(vi)}{=} \int_0^{\infty} \frac{1}{x+1}e^{-\frac{x \sigma^2}{\rho_o}} \mathcal{L}_{\mathcal{I}}\left(\frac{x}{\rho_0}\right) dx 
\label{rate}
\end{align}  
where $(v)$ follows from (\ref{SINR}), and $(vi)$ is obtained by changing the variables $x=\left(e^t-1\right)$. The theorem is obtained by substituting the value of $\mathcal{L}_{\mathcal{I}}\left(s\right) $ from (\ref{lt1}) in (\ref{rate}) for $s=\frac{x}{\rho_o}$ and substituting the value of $\mathbb{E}[P^{\frac{2}{\eta}}]$ from \textbf{Lemma~\ref{lem1}}.

\section{Proof of Theorem~\ref{th3}} \label{proofth3}

For the interference experienced by a cellular uplink from UEs in tier $j$, we find the LT of the aggregate interference at a tagged BS located at the origin. Similar to \textbf{Appendix~\ref{proof2}}, this proof is based on the 3 facts and a key assumption listed below:

\begin{itemize}
\item \textbf{Fact $\#1$}: the average useful signal received at any BS in the $j^{th}$ tier is equal to the cutoff threshold $\rho_o^{(j)}$.

\item \textbf{Fact $\#2$}: the average interference received from any single interfering UE in the $k^{th}$ tier is strictly less than $\rho_o^{(k)}$.

\item \textbf{Fact $\#3$}: at any time instant each BS in tier $k$ will have a single user served per channel, and hence, the intensity of interfering UEs from the $k^{th}$ tier on each channel is $\lambda_k$. 

\item \textbf{Key assumption}: \textcolor{red}{the interfering UEs from the $k^{th}$ tier  constitute a PPP with intensity $\lambda_k$ and their transmit powers are independent}. 
\end{itemize}

Hence, the aggregate interference from UEs in tier $k$ received at the tagged BS can be written as
\begin{equation} 
\mathcal{I}_k = \underset{u_i \in \tilde{\mathbf{\Phi}_k}\setminus \{o\}}{\sum} \mathbbm{1}\left(P_{ki} \left\|u_i\right\|^{-\eta} < \rho^{(k)}_o\right) P_{ki} h_i \left\|u_i\right\|^{-\eta}
\label{corri2}
\end{equation}
where $\tilde{\mathbf{\Psi}}_k$ is a PPP with intensity $\lambda_k $ representing the interfering UEs. Note that the indicator function in (\ref{corri2}) captures the correlation among the locations of the interfering UEs and the location of the tagged BS. The LT of the aggregate interference from UEs in tier $k$ received at the tagged BS in tier $j$ is obtained as
\small
\begin{align}
 &\mathcal{L}_{\mathcal{I}_k}(s) = \mathbb{E}\left[e^{-  s\underset{u_i \in \tilde{\mathbf{\Phi}_k}\setminus \{o\}}{\sum} \mathbbm{1}\left(P_{ki} \left\|u_i\right\|^{-\eta} < \rho^{(k)}_o\right) P_{ki} h_i \left\|u_i\right\|^{-\eta}}\right]\notag \\
 & = \exp\left( - 2 \pi  _k  \lambda_k \int_{\left(\frac{P_k}{\rho^{(k)}_o}\right)^\frac{1}{\eta}}^{\infty}\mathbb{E}_{P_k,h}\left[  \left(1- e^{-s P_k h x^{-\eta}}\right)\right] xdx \right)\notag \\
 & = \exp\left( - 2 \pi  _k  \lambda_k\int_{\left(\frac{P_k}{\rho^{(k)}_o}\right)^\frac{1}{\eta}}^{\infty}\mathbb{E}_{P_k}\left[ \frac{s P_k x}{x^{\eta}+s P_k }\right]dx \right)\notag \\
&= \exp\left( - {2 \pi  _k  \lambda_k} {s^\frac{2}{\eta}} \mathbb{E}_{P_k}\left[ P_k^\frac{2}{\eta}\right] \int_{(s \rho^{(k)}_o)^\frac{-1}{\eta}}^{\infty} \frac{y}{y^{\eta}+1 }dy \right). 
\label{lt2}
\end{align}
\normalsize
The theorem is obtained by substituting (\ref{lt2}) in (\ref{SINR2}) for $s=\frac{\theta}{\rho^{(j)}_o}$ and substituting the value of $\mathbb{E}[P_k^{\frac{2}{\eta}}]$ from \textbf{Lemma~\ref{lem2}}.

\section{Proof of Lemma~\ref{lem5}} \label{lemm5}

Let $r_k = \underset{m^{(k)}_i \in \mathbf{\Psi}_k}{\min}\left(\left\|u-m^{(k)}_i\right\|\right)$ be the distance from a UE to its nearest BSs from each of the coexisting tiers. Then, $f_{r_k}(r)= 2 \pi \lambda_k r e^{-\pi \lambda_k r^2}$, $0 \leq r \leq \infty$. Since a UE connects to the BS with the best link quality, then given that a generic UE $u$ is connected to a generic BS from the $j^{th}$ tier, we have $r_j^{\eta_j} \leq r_k^{\eta_k}$ $\forall k$, $k \neq j$. Using this fact, we can write $r_j^{\eta_j} =  \underset{k}{\min}(r_k^{\eta_k})$, and hence, the transmit power of a generic UE connected to a generic BS in the $j^{th}$ tier is given by $P_j= \rho^{(j)}_o  \underset{k}{\min}(r_k^{\eta_k})$ such that $0 \leq P_j \leq P_u$. Hence, the {\em cdf} of the transmit power can be written as  


\begin{align}
F_{P_j}(x) &=  \frac{1- e^{-\sum_{k=1}^{K}\pi \lambda_k \left(\frac{x}{\rho_o^{(j)}}\right)^{\frac{2}{\eta_k}}}}{1-e^{-\sum_{b=1}^{K}\pi \lambda_b \left(\frac{P_u}{\rho^{(j)}_o}\right)^{\frac{2}{\eta_b}}}}
\end{align}
and the {\em pdf} of the transmit power is given by
\begin{align}
f_{P_j}(x) &=  \frac{\sum_{k=1}^{K} \frac{2 \pi \lambda_k x^{\frac{2}{\eta_k}-1} }{\eta_k (\rho_o^{(j)})^\frac{2}{\eta_k} }}{1-e^{-\sum_{b=1}^{K}\pi \lambda_b \left(\frac{P_u}{\rho^{(j)}_o}\right)^{\frac{2}{\eta_b}}}} {e^{-\sum_{a=1}^{K}\pi \lambda_a \left(\frac{x}{\rho_o^{(j)}}\right)^{\frac{2}{\eta_a}}}}.
\end{align}

The $\alpha^{th}$ moment of $P_k$ cannot be obtained in closed from except for a common path-loss exponent $\eta$.

\section{Proof of Theorem~\ref{th5}} \label{proof5}

This proof is based on the 3 facts and a key assumption listed in \textbf{Appendix~\ref{proofth3}}. The aggregate interference from UEs in tier $k$ received at the tagged BS from tier $j$ can be written as
\begin{equation} 
\mathcal{I}_k = \underset{u_i \in \tilde{\mathbf{\Phi}_k}\setminus \{o\}}{\sum} \mathbbm{1}\left(P_{ki} \left\|u_i\right\|^{-\eta_j} < \rho^{(k)}_o\right) P_{ki} h_i \left\|u_i\right\|^{-\eta_j}.
\label{diff}
\end{equation}
Note that, although the path-loss exponents are different, as long as the UEs associate based on the link quality, (\ref{diff}) will hold ture. Similar to \textbf{Appendix~\ref{proofth3}}, after some mathematical manipulations, the theorem can be proved. %

\end{document}